\documentclass[12pt,leqno]{article}
\usepackage[fleqn]{amsmath}
\usepackage{amssymb,latexsym,color,dsfont,enumitem}
\usepackage[margin=1.1in]{geometry}
\usepackage[auth-sc]{authblk}
\usepackage{tikz}
\usepackage{pgf}
\usepackage{wasysym} 
\usepackage{mathtools}

\usepackage[cp1251]{inputenc}
\usepackage[T2A]{fontenc}
\usepackage[english,russian]{babel}

\newtheorem{theorem}{Theorem}[section]

\newtheorem{lemma}[theorem]{Lemma}

\newtheorem{corollary}[theorem]{Corollary}

\newenvironment{proof}{\noindent
  \textbf{Proof.}}{\hfill$\Box$\\}

\makeatletter
\newcommand{\ExtDiamond}{\@ifnextchar({\ExtDiamond@i}{\ExtDiamond@i({p})}}
\def\ExtDiamond@i(#1){{\Diamond\!\!\!\!\Diamond}_{#1}}
\makeatother

\newcommand{\mynext}{\!\raisebox{-.2ex}{ 
            \mbox{\unitlength=0.9ex
            \begin{picture}(2,2)
            \linethickness{0.06ex}
            \put(1,1){\circle{2}} 
            \end{picture}}}       
            \,}

\renewcommand{\iff}{\Longleftrightarrow}

\newlength{\templength}
\newcommand{\doublesymbol}[2]
{
\settowidth{\templength}{\mbox{#1}}
\mbox{#1}\hspace{-\templength}\hspace{#2}\mbox{#1}
}
\newcommand{\doubleDiamondOne}{\doublesymbol{$\Diamond$}{0.10em}}
\newcommand{\doubleDiamondTwo}{\doublesymbol{$\Diamond$}{0.16em}}
\newcommand{\pDiamond}{\doubleDiamondOne}
\newcommand{\PDiamond}{\doubleDiamondTwo}
\newcommand{\nat}{\ensuremath{\mathds{N}}}
\newcommand{\con}{\wedge}

\newcommand{\dis}{\vee}

\newcommand{\imp}{\rightarrow}
\newcommand{\equivalence}{\leftrightarrow}

\newcommand{\vp}{\ensuremath{\varphi}}


\binoppenalty = 10000
\relpenalty   = 10000

\begin{document}

\selectlanguage{english}

\sloppy

\title{Algorithmic properties of first-order modal logics of linear
  Kripke frames in restricted languages\thanks{To appear in {\em Journal of Logic and Computation},
    \texttt{https://doi.org/10.1093/logcom/exab030}}} \author[1]{Mikhail Rybakov}
\author[2]{Dmitry Shkatov} \affil[1]{Institute for Information
  Transmission Problems of Russian Academy of Sciences, Moscow 127051,
  Russia} \affil[2]{School of Computer Science and Applied
  Mathematics, University of the Witwatersrand, Johannesburg,
  WITS2050, South Africa, \texttt{shkatov@gmail.com}} \date{\today}
\maketitle

\begin{abstract}
  We study the algorithmic properties of first-order monomodal logics
  of frames $\langle \nat, \leqslant \rangle$,
  $\langle \nat, < \rangle$, $\langle \mathds{Q}, \leqslant \rangle$,
  $\langle \mathds{Q}, < \rangle$,
  $\langle \mathds{R}, \leqslant \rangle$,
  $\langle \mathds{R}, < \rangle$, as well as some related logics, in
  languages with restrictions on the number of individual variables as
  well as the number and arity of predicate letters.  We show that the
  logics of frames based on $\nat$ are $\Pi^1_1$-hard---thus, not
  recursively enumerable---in languages with two individual variables,
  one monadic predicate letter and one proposition letter.  We also
  show that the logics of frames based on $\mathds{Q}$ and
  $\mathds{R}$ are $\Sigma^0_1$-hard in languages with the same
  restrictions.  Similar results are obtained for a number of related
  logics.
\end{abstract}

\section{Introduction}

How algorithmically expressive are first-order modal logics?  More
expressive, it is reasonable to assume, than the classical first-order
logic $\mathbf{QCl}$---just as propositional modal logics are, as a
rule, computationally harder than the classical propositional logic.
(In this context, it is natural to consider logics as sets of
validities, rather than as calculi: understood as calculi
conservatively extending $\mathbf{QCl}$ with a recursively enumerable
set of axioms and finitary rules of inference, first-order modal
logics are {\em a priori} $\Sigma^0_1$-complete.\footnote{The reader
  in need of a reminder of the basic concepts of computability theory
  may consult~\cite{Rogers}.})  Numerous first-order modal logics are,
however, just as algorithmically expressive as $\mathbf{QCl}$, i.e.
$\Sigma^0_1$-complete: some---such as $\mathbf{ QK}$, $\mathbf{QT}$,
$\mathbf{QD}$, $\mathbf{QKB}$, $\mathbf{QKTB}$, $\mathbf{QS4}$ and
$\mathbf{QS5}$---are recursively axiomatizable over $\mathbf{QCl}$;
others---such as logics of elementary classes of Kripke frames---are
recursively embeddable~\cite{RSh20JLC,RShAiML18} into $\mathbf{QCl}$
by the standard translation~\cite{Benthem85,ChRyb00}, \cite[Section
3.12]{GShS}.  This suggests a need for a more fine-grained analysis,
which takes into account the algorithmic expressivity not only of full
logics but also of their fragments obtained by placing restrictions on
the structure of formulas.  Such analysis allows us to distinguish
$\Sigma^0_1$-complete modal logics from $\mathbf{QCl}$ by algorithmic
expressivity: while the monadic fragment of $\mathbf{QCl}$ is
decidable~\cite{Lowenheim15,Behmann22}, the monadic fragments of most
$\Sigma^0_1$-complete modal logics are not~\cite{Kripke62}; while the
two-variable fragment of $\mathbf{QCl}$ is
decidable~\cite{Mortimer75,GKV97}, the two-variable fragments of most
$\Sigma^0_1$-complete modal logics are not~\cite{KKZ05}.  This leads
to the study of the algorithmic properties of the {\em fragments} of
first-order modal logics.

This study is also motivated by an underdevelopment, relative
to~$\mathbf{QCl}$~\cite{BGG97}, of the {\em algorithmic classification
  problem} for first-order modal logics---an effort to identify their
maximal decidable and minimal undecidable fragments.  Despite
extensive
literature~\cite{Kripke62,MMO65,Mints68,Ono77,Gabbay81,AD90,GSh93,WZ01,KKZ05,RSh19SL,RShJLC20a},
whose summary can be found in the Introduction to the authors' earlier
article~\cite{RShJLC20a}, much less is known about the algorithmic
properties of the fragments of first-order modal, and closely related
superintuitionistic, logics than about the algorithmic properties of
the fragments of $\mathbf{QCl}$.

The algorithmic properties of one-variable and two-variable fragments
of first-order modal logics are also of interest due to close links
between those fragments and, respectively, two-dimensional and
three-dimensional propositional modal
logics~\cite{GSh98,GKWZ,Shehtman19}.

In the present paper, we attempt to identify the minimal undecidable
fragments of the first-order monomodal logics of frames
$\langle \nat, \leqslant \rangle$, $\langle \nat, < \rangle$,
$\langle \mathds{Q}, \leqslant \rangle$,
$\langle \mathds{Q}, < \rangle$,
$\langle \mathds{R}, \leqslant \rangle$ and
$\langle \mathds{R}, < \rangle$, as well as of closely related linear
orders.\footnote{Preliminary results on the logics of
  $\langle \nat, \leqslant \rangle$ and $\langle \nat, < \rangle$ were
  reported in a conference paper~\cite{RSh20AiML}.  The present
  article improves on the earlier paper in two respects.  First, we
  obtain stronger results on the logics of frames based on the
  naturals by proving $\Sigma^1_1$-hardness for weaker languages; the
  results reported in this article are plausibly optimal, as discussed
  in Section~\ref{sec:discussion}.  Second, we report results on
  logics of the rationals (and hence on $\mathbf{QS4.3}$,
  $\mathbf{QK4.3.D.X}$ and $\mathbf{QK4.3}$), the reals and infinite
  ordinals distinct from $\omega$.}  The logics of these structures
are of interest on at least three counts.

First, the structures themselves are of interest, for at least two
reasons.  They have long been considered natural models of the flow of
time~\cite{Prior67, Goldblatt92,GHR94}; therefore, their study has
been stimulated by the long-standing interest in temporal reasoning.
Even though we focus on monomodal languages, since our results are
negative, they do apply to more expressive languages with modalities
for the past as well as the future.  On a more basic level perhaps,
the structures based on the naturals, the rationals and the reals are
so fundamental to mathematics that the properties of the corresponding
logics are of intrinsic mathematical significance: the classical
theories of these structures, both first-order and second-order, have
been extensively studied; in particular, it has long been known that
the monadic second-order theory of $\langle \nat, < \rangle$ is
decidable~\cite{Elgot61}.

Second, the logics considered here call for techniques substantially
different from those used in the previous
studies~\cite{Kripke62,GSh93,KKZ05,RSh19SL,RShJLC20a} of the
algorithmic properties of fragments of monomodal predicate logics.
For most such logics, known undecidability proofs for fragments with a
single monadic predicate letter, a restriction considered here,
require a transformation of models that increases their branching
factor.  The methods the authors used earlier~\cite{RSh19SL,RShJLC20a}
intrinsically rely on increasing the branching factor of models, a
feature inherited from the propositional-level
techniques~\cite{Halpern95,ChRyb03,RShIGPL18,RShICTAC18,RShIGPL19,RShJLC21b}
those methods are based on.  On the other hand, the construction by
Blackburn and Spaan~\cite{BS93}, which is propositional but, in
principle, adaptable to first-order logics, does not seem to be
readily applicable to logics of transitive frames since it relies on
the use of a modal operator suitable for counting transitions along
the accessibility relation of a frame.  The techniques used here
should, therefore, be of relevance to the study of the algorithmic
properties of fragments of monomodal logics of structures with a
resticted branching factor, including trees.

Third, the logics of frames $\langle \nat, \leqslant \rangle$ and
$\langle \nat, < \rangle$ are algorithmically hard---as follows from
Theorem~\ref{thr:nat} below, they are $\Pi^1_1$-hard.  Most research
into the algorithmic properties of monomodal, and closely related
superintuitionistic, predicate logics has been focused on decidability
and undecidability. The only study to date~\cite{RShJLC20a}, as far as
we know, of the algorithmic properties of fragments of not recursively
enumerable monomodal predicate logics concerns logics of frames with
finite sets of worlds (likewise, very few results~\cite[Theorem 1,
p. 272]{Gabbay81}, \cite{RShJLC21a} are known on algorithmic
properties of fragments of not recursively enumerable
superintuitionistic predicate logics).  While it is natural that
decidability and undecidability are the main concern of the Classical
Decision Problem~\cite{BGG97}, the study of the algorithmic properties
of modal first-order logics should, we believe, involve identifying
minimal fragments that are as hard---in pertinent classes of the
arithmetical, or the analytical, hierarchy---as the full logics.

The algorithmic properties of fragments of not recursively enumerable
logics have been, however, extensively studied in the context of
first-order languages more expressive than monomodal ones considered
here---most recently, by Hodkinson, Wolter and
Zakharyaschev~\cite{HWZ00,WZ01} (for a summary, see \cite[Chapter
11]{GKWZ}; for earlier work,
see~\cite{ANS79,Szalas86,SzalasHolenderski88,Abadi89,Merz92}).  The
methods used here have been inspired by those of Wolter and
Zakharyaschev~\cite[Theorem 2.3]{WZ01}, who encode a $\Sigma^1_1$-hard
tiling problem in a first-order language with two modal operators, one
corresponding to a basic accessibility relation and the other to its
reflexive transitive closure.  A similar result~\cite[Theorem
2]{HWZ00} has been obtained by Hodkinson, Wolter and Zakharyaschev for
the first-order temporal logic of $\langle \nat, \leqslant \rangle$ in
the language with two temporal operators: ``next,'' corresponding to
the immediate successor relation on $\nat$, and ``always in the
future,'' corresponding to its reflexive transitive closure, the
partial order $\leqslant$ (both operators can be expressed with a
binary temporal operator ``until'').\footnote{We touch on such
  languages in Section~\ref{sec:discussion}.}  Another similar
result~\cite[Theorem 5.6]{Wolter00} has been obtained by Wolter for
the first-order logics containing, alongside the individual knowledge
operators, the common knowledge operator whose semantics involves the
reflexive transitive closure of the union of the accessibility
relations for the individual knowledge operators.  We extend herein to
monomodal logics, which do not have expressive power for capturing the
reflexive transitive closures of accessibility relations, techniques
developed by Hodkinson, Wolter, and
Zakharyaschev~\cite{HWZ00,Wolter00,WZ01}.

The paper is structured as follows.  In Section~\ref{sec:prelim}, we
introduce preliminaries on first-order modal logic. In
Section~\ref{sec:ref}, we prove that satisfiability for the logic of
$\langle \nat, \leqslant \rangle$ is $\Sigma^1_1$-hard in languages
with two individual variables, a single monadic predicate letter and a
single proposition letter.  In Section~\ref{sec:discrete}, the results
of Section~\ref{sec:ref} are extended to logics of frames
$\langle \nat, R \rangle$, where $R$ is a binary relation between $<$
and $\leqslant$, and to frames based on infinite ordinals of a special
form.  In Section~\ref{sec:dense}, we prove, by modifying the argument
of Sections~\ref{sec:ref} and~\ref{sec:discrete}, that satisfiability
for logics of $\langle \mathds{Q}, \leqslant \rangle$,
$\langle \mathds{Q}, < \rangle$,
$\langle \mathds{R}, \leqslant \rangle$ and
$\langle \mathds{R}, < \rangle$ is $\Pi^0_1$-hard in languages with
the same restrictions.  In Section~\ref{sec:rest}, we briefly mention
some corollaries of the results proven earlier.  We conclude, in
Section~\ref{sec:discussion}, by discussing first-order temporal
logics with modalities ``next'' and ``always in the future,'' as well
as questions for future study.

\section{Preliminaries}
\label{sec:prelim}

An unrestricted first-order predicate modal language contains
countably many individual variables; countably many predicate letters
of every arity, including 0 (nullary predicate letters are also called
proposition letters); the propositional constant $\bot$ (falsity), the
binary propositional connective $\imp$, the unary modal connective
$\Box$ and the quantifier $\forall$.  Formulas as well as the symbols
$\top$, $\neg$, $\vee$, $\wedge$, $\equivalence$, $\exists$ and
$\Diamond$ are defined in the usual way.  We also use the
abbreviations $\Box^0 \vp = \vp$, $\Box^{n+1} \vp = \Box \Box^n \vp$
and $\Diamond^n \vp = \neg \Box^n \neg \vp$, for every $n \in \nat$.

When parentheses are omitted, unary connectives and quantifiers are
assumed to bind tighter than $\wedge$ and $\vee$, which are assumed to
bind tighter than $\imp$ and $\equivalence$. We usually write atomic
formulas in prefix notation; for some predicate letters we, however,
use infix.

A {\em normal predicate modal logic\/} is a set of formulas containing
the validities of the classical first-order predicate logic
$\mathbf{QCl}$, as well as the formulas of the form
$\Box (\vp \imp \psi) \imp (\Box \vp \imp \Box \psi)$, and closed
under predicate substitution, modus ponens, generalisation and
necessitation.\footnote{The reader wishing a reminder of the
  definition of these closure conditions may consult~\cite[Definition
  2.6.1]{GShS}; for a detailed discussion of predicate substitution,
  consult~\cite[\S2.3, \S2.5]{GShS}.}

In this paper, we are interested in predicate logics defined using the
Kripke semantics.\footnote{For Kripke semantics for predicate modal
  logics, see \cite{ShS90,SSh93,HC96,FM98,Garson01,BG07,Goldblatt11},
  \cite[\S 3.1]{GShS}.}

A {\em Kripke frame\/} is a tuple $\frak{F} = \langle W,R\rangle$,
where $W$ is a non-empty set of {\em possible worlds\/} and $R$ is a
binary {\em accessibility relation\/} on $W$; if $w R v$, we say that
$v$ {\em is accessible from\/} $w$ and that $w$ {\em sees\/} $v$.

A {\em predicate Kripke frame with expanding domains\/} is a tuple
$\frak{F}_D = \langle W,R, D\rangle$, where $\langle W,R\rangle$ is a
Kripke frame and $D$ is a function from $W$ into the set of non-empty
subsets of some set, {\em the domain of\/~$\frak{F}_D$}; the function
$D$ is required to satisfy the condition that $wRw'$ implies
$D(w) \subseteq D(w')$. The set $D(w)$, also denoted by $D_w$, is {\em
  the domain of\/~$w$}.  We also consider predicate frames satisfying
the stronger condition that $D(w) = D(w')$, for every $w, w' \in W$;
such frames are {\em predicate frames with a constant
  domain}.\footnote{More precisely, such predicate frames are known as
  predicate frames with globally constant domains.  For connected
  predicate frames, the global constancy condition given above is
  equivalent to the local constancy condition requiring that
  $D(w) = D(w')$ whenever $w R w'$.  Since the frames we consider are
  rooted, and therefore connected, the distinction between global and
  local constancy is immaterial for the purposes of this paper.}  {\em
  Predicate frame\/} simpliciter means a predicate frame with
expanding domains.

A {\em Kripke model\/} is a tuple $\frak{M} = \langle W,R,D,I\rangle$,
where $\langle W,R, D\rangle$ is a predicate Kripke frame and $I$,
{\em the interpretation of predicate letters\/} with respect to worlds
in $W$, is a function assigning to a world $w\in W$ and an $n$-ary
predicate letter $P$ an $n$-ary relation $I(w,P)$ on $D(w)$---i.e.,
$I(w,P) \subseteq D_w^n$.  In particular, if $p$ is a proposition
letter, then $I(w,P) \subseteq D_w^0 = \{ \langle \rangle \}$; thus,
we can identify truth with $\{ \langle \rangle \}$ and falsity with
$\varnothing$.  We often write $P^{I, w}$ instead of $I(w,P)$. We say
that a model $\langle W,R,D,I\rangle$ is {\em based on\/} the frame
$\langle W,R \rangle$ and is {\em based on\/} the predicate frame
$\langle W,R,D\rangle$.

We use the standard notation for binary relations: the $n$-fold, for
each $n \in \nat^+$, composition of a binary relation $R$ is denoted
by $R^n$; if $R$ is a binary relation on a non-empty set $W$ and
$w \in W$, then $R(w) = \{ v \in W : w R v \}$.

An {\em assignment\/} in a model is a function $g$ associating with
every individual variable $x$ an element $g(x)$ of the domain of the
underlying predicate frame.  We write $g' \stackrel{x}{=} g$ if
assignment $g'$ differs from assignment $g$ in at most the value of
$x$.

The truth of a formula $\varphi$ at a world $w$ of a model $\frak{M}$
under an assignment $g$ is defined recursively:
\begin{itemize}
\item $\frak{M},w\models^g P(x_1,\ldots,x_n)$ if
  $\langle g(x_1),\ldots,g(x_n)\rangle \in P^{I, w}$, where $P$ is an
  $n$-ary predicate letter;
\item $\frak{M},w \not\models^g \bot$;
\item $\frak{M},w\models^g\varphi_1 \imp \varphi_2$ if
  $\frak{M},w\models^g\varphi_1$ implies $\frak{M},w\models^g\varphi_2$;
\item $\frak{M},w\models^g\Box\vp_1$ if
  $\frak{M},w'\models^g\varphi_1$, for every $w' \in R(w)$;
\item $\frak{M},w\models^g\forall x\,\varphi_1$ if
  $\frak{M},w\models^{g'}\varphi_1$, for every $g'$ such that
  $g' \stackrel{x}{=} g$ and $g'(x)\in D_w$.
\end{itemize}

Observe that, if $\frak{M} = \langle W,R,D,I\rangle$ is a Kripke
model, $w \in W$ and $I_w(P) = I(w,P)$, then
$\frak{M}_w = \langle D_w, I_w \rangle$ is a classical model, or
structure.

We shall often use the following notation.  Let
$\frak{M} = \langle W,R,D,I\rangle$ be a model, $w \in W$, and
$a_1, \ldots, a_n \in D_w$; let also $\vp(x_1, \ldots, x_n)$ be a
formula whose free variables are among $x_1, \ldots, x_n$ and $g$ an
assignment with $g(x_1) = a_1, \ldots, g(x_n) = a_n$.  Then, we write
$\frak{M}, w \models \vp (a_1, \ldots, a_n)$ instead of
$\frak{M}, w \models^g \vp (x_1, \ldots, x_n)$.  This notation is
unambiguous since the languages we consider lack constants and the
truth value of $\vp(x_1, \ldots, x_n)$ does not depend on the values
of variables other than $x_1, \ldots, x_n$.

A formula $\vp$ is {\em true at a world\/} $w$ of a model $\frak{M}$
(in symbols, $\frak{M},w\models \vp$, or simply $w \models \vp$ if
$\frak{M}$ is clear from the context) if $\frak{M},w\models^g \vp$,
for every $g$ assigning to free variables of $\vp$ elements of $D_w$.
A formula $\vp$ is {\em true in a model\/~$\frak{M}$} (in symbols,
$\frak{M} \models \vp$) if $\frak{M},w\models \vp$, for every world
$w$ of $\frak{M}$.  A formula $\vp$ is {\em valid on a predicate
  frame\/~$\frak{F}_D$} if $\vp$ is true in every model based on
$\frak{F}_D$. A formula $\vp$ is {\em valid on a frame\/~$\frak{F}$}
(in symbols, $\frak{F} \models \vp$) if $\vp$ is valid on every
predicate frame $\langle \frak{F}, D \rangle$.  These notions, and the
corresponding notation, can be extended to sets of formulas, in a
natural way.

We shall often rely on the following observation: if a model
$\frak{M}$ is based on a predicate frame with a constant domain, then
$\frak{M},w\models \vp$ if, and only if, $\frak{M},w\models^g \vp$,
for every assignment $g$.

Let $\frak{C}$ be a class of Kripke frames.  The set of formulas valid
on every frame in $\frak{C}$ is a predicate modal logic, which we
denote by $\mathbf{L} (\frak{C})$; we write $\mathbf{L} (W, R)$
instead of $\mathbf{L} (\{\langle W, R \rangle \})$.  The set of
formulas valid on every predicate frame with a constant domain based
on some frame in $\frak{C}$ also is a predicate modal logic, which we
denote by ${\mathbf{L}_{c}} (\frak{C})$; we write
${\mathbf{L}_{c}} (W, R)$ instead of
${\mathbf{L}_{c}} (\{\langle W, R \rangle \})$.

\section{The first-order logic of $\langle \nat, \leqslant \rangle$}
\label{sec:ref}

In this section, we prove that satisfiability for
$\mathbf{L}( \nat, \leqslant )$ is $\Sigma^1_1$-hard---hence,
$\mathbf{L}(\nat, \leqslant )$ is $\Pi^1_1$-hard, and therefore not
recursively enumerable---in languages with two individual variables,
one monadic predicate letter and one proposition letter.

\subsection{Reduction from a tiling problem}
\label{sec:reduction}

We do so by encoding the following $\Sigma^1_1$-complete~\cite[Theorem
6.4]{Harel86} $\nat \times \nat$ recurrent tiling problem.  We are
given a set of {\em tiles}, a tile $t$ being a $1 \times 1$ square,
with a fixed orientation, whose edges are colored with
$\textit{left}(t)$, $right(t)$, $up(t)$ and $down(t)$.  A {\em tile
  type} is a quadruplet of edge colors.  Each tile has a type from the
set $T = \{t_0, \ldots, t_s \}$, tiles of each type being in an
unlimited supply. A {\em tiling\/} is an arrangement of tiles on the
rectangular $\nat \times \nat$ grid so that the edge colors of the
adjacent tiles match, both horizontally and vertically.  We are to
determine whether there exists a tiling of the grid in which a tile of
type $t_0$ occurs infinitely often in the leftmost column, i.e.,
whether there exists a function $f: \nat \times \nat \to T$ such that,
for every $n, m \in \nat$,
\begin{itemize}
\item[]($T_1$)~ $right(f(n,m)) = \textit{left} (f(n+1,m))$;
\item[]($T_2$)~ $up(f(n,m)) = \textit{down} (f(n,m+1))$;
\item[]($T_3$)~ the set $\{ m \in \nat : f(0, m) = t_0 \}$ is
  infinite.
\end{itemize}

The idea of the encoding we use is based on the work of Hodkinson,
Wolter and Zakharyaschev~\cite[Theorem 2]{HWZ00} (also
see~\cite[Theorem 11.1]{GKWZ}; similar constructions have been used
elsewhere~\cite{Spaan93,Marx99,WZ01,KKZ05}), but the encoding itself
is more involved since our language lacks the ``next'' operator
available to them (we touch on languages with ``next'' in
Section~\ref{sec:discussion}).  To make the underlying idea clearer,
we construct, in the initial encoding, a formula of two individual
variables without regard for the number of predicate letters involved;
subsequently, we reduce the formula thus obtained to a formula with a
single monadic and a single proposition letter.

Let $\triangleleft$ be a binary predicate letter, $M$ and $P_t$---for
every $t \in T$---monadic predicate letters and $p$ a proposition
letter.

Given a formula $\vp$ in such a language, define
$$
\begin{array}{c}
  \pDiamond\varphi = \Diamond(p \con\Diamond( \neg p \con\varphi));
  \medskip\\
  \pDiamond^0 \varphi = \varphi; \quad \pDiamond^{n+1} \varphi =
  \pDiamond\pDiamond^n\varphi, \mbox{ for every } n \in \nat.
\end{array}
$$
The operator $\pDiamond$ forces a transition to a different world when
evaluating a formula $\pDiamond \vp$ in a reflexive model, just as
$\Diamond$ does in an irreflexive one. To make this explicit, we
define, given a model based on the frame
$\langle \nat, \leqslant \rangle$, a binary relation
$R_{\scriptsize\pDiamond}$ on $\nat$ by
$$
\begin{array}{lcl}
w R_{\scriptsize\pDiamond} v
  & \leftrightharpoons
  & \mbox{$v \not\models p$ and, for some $u \in \nat$, both $w \leqslant u \leqslant v$
          and $u \models p$.}
\end{array}
$$
Thus, $R_{\scriptsize\pDiamond}$ is irreflexive and transitive.

We also define
$$
\begin{array}{lcl}
  U (x) & = & \displaystyle\bigwedge\limits_{t \in T} \neg P_t(x).
\end{array}
$$

In such a language, define (for brevity, in formulas we write $l$,
$r$, $u$ and $d$ instead of \textit{left}, \textit{right},
\textit{up} and \textit{down})
$$
\begin{array}{rcl}
  A_0 & = & \exists x\,  \Box\, U (x); \\
  A_1 & = &  \exists x\, (\neg U (x) \con M(x)); \medskip\\
  A_2 & = & \forall x \exists y\,  (x \triangleleft y); \medskip\\
  A_3 & = & \forall x \forall y\,  (x \triangleleft y \imp \Box (
            \exists x\, M(x) \imp x
            \triangleleft y)); \medskip\\
  A_4 & = & \forall x \forall y\, (  x \triangleleft y \imp \Box (
            M(x) \equivalence \neg p \con\pDiamond M(y) \con  \neg
            \pDiamond^2 M(y) ); \medskip\\
  A_5 & = & \displaystyle
            \forall x \forall y\, \Box \bigwedge\limits_{t \in T}
            \big( M(x) \con P_t (y) \imp \Box (M(x) \imp P_{t} (y)) \big) ; \medskip\\
  A_6 & = & \displaystyle
            \forall x\, \Box \bigwedge\limits_{t \in T}    (P_t(x) \imp
            \bigwedge\limits_{\mathclap{t' \ne t}} \neg P_{t'}(x)); \medskip\\
  A_7 & = & \displaystyle
            \forall x \forall y\, \Box  \bigwedge\limits_{t \in T}( x \triangleleft y
            \con P_t(x) \imp  \bigvee\limits_{\mathclap{r(t) = l(t')}} P_{t'} (y)) ; \medskip\\
  A_8 & = & \displaystyle
            \forall x \forall y\, \Box \bigwedge\limits_{t \in T} \big(
            M(x) \con P_t (y) \imp \Box (\exists y\, (x
            \triangleleft y \con M(y)) \imp \bigvee\limits_{\mathclap{u(t) =
            d(t')}} P_{t'} (y)) \big); \medskip\\
  A_9 & = & \forall x\, (M(x) \imp \Box \pDiamond P_{t_0} (x)).
\end{array}
$$

Let $A$ be the conjunction of $A_0$ through $A_{9}$.  Observe that $A$
contains only two individual variables.

The relation ${\triangleleft}$ can be thought of as the immediate
successor relation on the domain ${D}_0$ of the world $0$ where $A$ is
being evaluated.  An element $a \in {D}_0$ such that $w \models M(a)$
can be thought of as marking, or labelling, world $w$; thus, we say
that $a$ is a {\em mark} of $w$.  Then, $A_2$ asserts that every
element of $D_0$ has an immediate successor, while $A_3$ asserts that
the immediate successor relation persists throughout the part of the
frame where worlds are marked by elements of $D_0$. Given that, $A_1$
and $A_4$ imply the existence of an infinite sequence
$a_0 \triangleleft a_1 \triangleleft a_2 \triangleleft \ldots$ of
elements of $D_0$ such that every world refuting $p$ is marked, as we
shall see uniquely, by some element of the sequence; they also imply
that the order of the marks of successive, with respect to
$\leqslant$, worlds agrees with the relation $\triangleleft$.  This,
as we shall see, gives us an $\nat \times \nat$ grid whose rows
correspond to the worlds of $\langle \nat, \leqslant \rangle$ and
whose columns correspond to the elements of the sequence
$a_0 \triangleleft a_1 \triangleleft a_2 \triangleleft \ldots\,$.
Building on this, $A_5$ through $A_9$ describe a sought tiling of thus
obtained grid.  (The element of $D_0$ whose existence is asserted by
$A_0$ is not part of the tiling---its presence shall be relied upon in
a subsequent reduction.) 

\begin{lemma}
  \label{lem:tiling-reduction}
  There exists a recurrent tiling of\/ $\nat \times \nat$ satisfying
  \textup{($T_1$)} through \textup{($T_3$)} if, and only if,
  $\langle \nat, \leqslant \rangle \not\models \neg A$.
\end{lemma}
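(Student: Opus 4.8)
The plan is to unfold the definition of validity on a frame: $\langle \nat, \leqslant \rangle \not\models \neg A$ holds precisely when $A$ is satisfied at some world of some model $\frak{M}$ based on $\langle \nat, \leqslant \rangle$, and, passing to the submodel generated by that world (which is again isomorphic to $\langle \nat, \leqslant \rangle$), we may assume $A$ holds at the root $0$. So I would prove the two implications separately: from a recurrent tiling build such a model, and from such a model read off a recurrent tiling.

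For the direction from tiling to model, suppose $f \colon \nat \times \nat \to T$ satisfies ($T_1$)--($T_3$). I would define $\frak{M} = \langle \nat, \leqslant, D, I \rangle$ by taking $D_0$ to consist of a $\triangleleft$-chain $a_0 \triangleleft a_1 \triangleleft a_2 \triangleleft \ldots$ together with one extra element witnessing $A_0$, interpreting $\triangleleft$ as the immediate-successor relation along the chain, and putting $p$ false exactly at the even worlds, so that $R_{\scriptsize\pDiamond}$ relates world $2m$ to every later even world and each single $R_{\scriptsize\pDiamond}$-step advances the row by one. I would mark diagonally, setting $M(a_n)$ true at world $2m$ iff $n = m$, and set $P_t(a_n)$ true at world $2m$ iff $f(n,m) = t$, leaving the extra element untiled everywhere. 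It then remains to check, conjunct by conjunct, that $A_0$ through $A_9$ hold at $0$: $A_0$--$A_3$ are immediate from the choice of $D_0$ and $\triangleleft$; $A_4$ follows because the diagonal marking places $M(a_{n+1})$ exactly one $R_{\scriptsize\pDiamond}$-step above $M(a_n)$ and nowhere two steps above; $A_5$--$A_8$ translate ($T_1$), ($T_2$) and the single-type and color-matching requirements; and $A_9$ reduces to ($T_3$) because $a_0$ is the unique mark at the root.

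For the direction from model to tiling, suppose $\frak{M}, 0 \models A$. I would first extract the $\triangleleft$-chain: $A_1$ provides a marked, tiled element, while $A_2$ and $A_3$ guarantee that the successor structure given by $\triangleleft$ is total and is preserved at all marked worlds, so iterating yields an infinite sequence $a_0 \triangleleft a_1 \triangleleft \ldots$ in $D_0$. The heart of the argument is to establish the grid: using the biconditional in $A_4$ I would prove by induction on $m$ that the $m$-th $\neg p$ world (equivalently, the world reached by $m$ successive $R_{\scriptsize\pDiamond}$-steps from $0$) carries $a_m$ as its unique mark, so that the pair (column $a_n$, row $m$) is well defined. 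I would then set $f(n,m) = t$ iff $P_t(a_n)$ holds at the $m$-th $\neg p$ world; $A_6$ makes this a function into $T$, $A_5$ makes it independent of the particular accessible world chosen, $A_7$ yields ($T_1$), $A_8$ yields ($T_2$), and $A_9$ applied to the root mark $a_0$ yields ($T_3$).

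The main obstacle is the grid-extraction induction in the second direction, where the absence of a genuine ``next'' operator forces all row-to-row transitions to be simulated through the relation $R_{\scriptsize\pDiamond}$ and the alternation of $p$. The delicate point is showing that the conjunct $\pDiamond M(y) \con \neg \pDiamond^2 M(y)$ in $A_4$ really pins the mark of the successor element to the immediately next $\neg p$ world, and to no later one, so that the marks march up the diagonal without gaps, repetitions, or branching; once this uniqueness-and-persistence claim is in hand, reading off the tiling and verifying ($T_1$)--($T_3$) is routine.
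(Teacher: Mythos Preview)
Your plan follows the same architecture as the paper's proof: for the tiling-to-model direction you build exactly the intended model (constant domain consisting of a $\triangleleft$-chain plus one extra ``untiled'' element for $A_0$, $p$ true at odd worlds, diagonal marking, tile predicates read off from $f$); for the model-to-tiling direction you extract the $\triangleleft$-chain from $A_1$--$A_3$, use $A_4$ to set up the row structure, and read the tiling off the $P_t$'s. You also correctly locate the delicate point in showing that $A_4$ forces the marks to march up the diagonal without slippage.

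Two places need tightening. First, your parenthetical ``the $m$-th $\neg p$ world (equivalently, the world reached by $m$ successive $R_{\scriptsize\pDiamond}$-steps from $0$)'' is not right as stated: nothing in $A$ forbids consecutive $\neg p$-worlds, so $\neg p$-worlds may come in blocks, and neither ``the $m$-th $\neg p$ world'' nor ``the world reached after $m$ $R_{\scriptsize\pDiamond}$-steps'' is well defined as a single world. The paper handles this by first proving that marks are constant on maximal $\neg p$-blocks (its property~(2)) and unique (its property~(3)), and only then defining $w_m$ as the least world carrying mark $a_m$; you will need some such device rather than speaking of \emph{the} $m$-th row. Second, ``$A_6$ makes this a function into $T$'' gives only uniqueness of $t$; you still owe existence, i.e.\ that every $a_n$ is actually tiled at every $w_m$. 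This is not immediate: $A_1$ only tiles $a_0$ at world $0$, and one must propagate horizontally via $A_7$ and vertically via $A_8$, which the paper carries out by an induction on $m$ (its property~(4)). Once these two points are addressed, the remainder of your outline goes through as the paper's proof does.
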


\begin{proof}
  (``if'') Suppose $\frak{M}, w_0 \models A$, for some model
  $\frak{M} = \langle \nat, \leqslant, D, I \rangle$ and some world
  $w_0 \in \nat$.  Since truth of formulas is preserved under taking
  generated submodels,\footnote{The notions of generated subframe and
    generated submodel for predicate modal logics are straightforward
    extensions of the respective notions~\cite[Section 2.1]{BdeRV} for
    propositional modal logics.} we may assume $w_0 = 0$.

  Since $0 \models A_1$, there exists $a_0 \in D_0$ such that
  $0 \not\models U (a_0)$ and $0 \models M(a_0)$.  Since
  $0 \models A_2$, we obtain an infinite sequence
  $a_0, a_1, a_2, \ldots$ of elements of $D_0$ such that
  $a_0 \triangleleft^{I,0} a_1 \triangleleft^{I,0} a_2
  \triangleleft^{I,0} \ldots\,\,$.  Since $0 \models A_3$, we obtain
  that
  $a_0 \triangleleft^{I,w} a_1 \triangleleft^{I,w} a_2
  \triangleleft^{I,w} \ldots\,$, for every $w \in \nat$ such that
  $w \models \exists x\, M(x)$.

  Since $0 \models A_4$, we obtain, for every $w,n \in \nat$,
  $$
  \begin{array}{lcl}
    w \models M(a_n) & \iff  &
    \left\{
         \begin{array}{l}
             w\phantom{'} \not\models p; \\
             w' \models M(a_{n+1}), \mbox{ for some } w' \in R_{\scriptsize\pDiamond}
               (w); \\
             w'' \in R_{\scriptsize\pDiamond}^2 (w) \mbox{ implies } w'' \not\models M(a_{n+1}).
         \end{array}
    \right.
  \end{array}
  \eqno (1)
  $$
  Thus, a mark changes from $a_n$ to $a_{n+1}$ once we pass through a
  world, or an unbroken non-empty sequence of worlds, satisfying $p$
  to a world refuting $p$.

  We now show that a mark remains unchanged until we have reached a
  world satisfying $p$, i.e., that for every $u, u', n \in \nat$,
  $$
  \begin{array}{l}
    \mbox{if $u\models M(a_n)$, $u \not\models p$, $u' \not\models p$,
    and no $v$ with $u \leqslant v \leqslant u'$
    or $u' \leqslant v \leqslant u$} \\
    \mbox{satisfies $v \models p$, then $u' \models M(a_n)$.}
  \end{array}
  \eqno (2)
  $$
  Assume that $u \models M(a_n)$, $u \not\models p$,
  $u' \not\models p$, and that no $v$ with
  $u \leqslant v \leqslant u'$ or $u' \leqslant v \leqslant u$
  satisfies $v \models p$.  Then, by ($1$), there exists
  $w' \in R_{\scriptsize\pDiamond} (u)$ such that
  $w' \models M(a_{n+1})$, and $w'' \not\models M(a_{n+1})$, for every
  $w'' \in R^2_{\scriptsize\pDiamond} (u)$. Let us fix the said $w'$.
  It follows immediately from the assumption that, for every
  $w, n \in \nat$,
  $$
  \begin{array}{lcl}
  w \in R^n_{\scriptsize\pDiamond} (u)
    & \iff
    & w \in R^n_{\scriptsize\pDiamond} (u').
  \end{array}
  $$
  Therefore, $w' \in R_{\scriptsize\pDiamond} (u')$, and
  $w'' \in R^2_{\scriptsize\pDiamond} (u')$ implies
  $w'' \not\models M(a_{n+1})$, for every $w'' \in \nat$.  Since by
  assumption $u' \not\models p$, we obtain, by ($1$), that
  $u' \models M(a_n)$.

  We next show that a mark of every world is unique, i.e. for every
  $w,n \in \nat$ and every $j \in \nat^+$,
  $$
  \begin{array}{lcl}
  w \models M(a_n)
    & \mbox{implies}
    & w \not\models M(a_{n+j}).
  \end{array}
  \eqno (3)
  $$
  Assume $w \models M(a_n)$.  By ($1$), there exists
  $w' \in R^{j+1}_{\scriptsize\pDiamond} (w)$ such that
  $w' \models M(a_{n+j+1})$.  Since $j \geqslant 1$ and
  $R_{\scriptsize\pDiamond}$ is transitive,
  $w' \in R_{\scriptsize\pDiamond}^{2} (w)$.  Therefore, by ($1$),
  $w \not\models M(a_{n+j})$.

  We next show that every element $a_n$ is tiled at every world marked
  by some element $a_m$, i.e. that for every $w, m, n \in \nat$,
  $$
  \begin{array}{lcl}
  w \models M(a_m)
    & \mbox{implies}
    & \mbox{$w \models P_t (a_n)$, for some $t \in T$.}
  \end{array}
  \eqno (4)
  $$

  We proceed by induction on $m$.

  As we have seen, $0 \not\models U(a_0)$, i.e.,
  $0 \models P_t (a_0)$, for some $t \in T$.  Since $0 \models A_7$,
  for every $n \in \nat$, there exists $t \in T$ such that
  $0 \models P_t (a_n)$.  Since $0 \models A_5$, for every
  $w, v, m, n \in \nat$ and every $t \in T$,
  $$
  \begin{array}{lcl}
  \mbox{$w \models M(a_m)$, $v \models M(a_m)$ and $w \models P_t(a_n)$}
     & \mbox{imply}
     & \mbox{$v \models P_t(a_n)$.}
  \end{array}
  \eqno (5)
  $$
  Therefore, ($4$) holds for $m = 0$.

  Assume ($4$) holds for $m \geqslant 0$ and suppose
  $w \models M(a_{m+1})$.  We claim that, then, there exists $w'$ such
  that $w' < w$ and $w' \models M(a_m)$.  To prove the claim we,
  first, observe that there exists $w'$ such that $w' \not\models p$
  and $w \in R_{\scriptsize\pDiamond} (w')$: otherwise, by ($2$),
  $w \models M(a_0)$, in contradiction with ($3$).  Fix the said $w'$.
  We next show that $w'' \in R^2_{\scriptsize\pDiamond} (w')$ implies
  $w'' \not\models M(a_{m+1})$.  Assume
  $w'' \in R^2_{\scriptsize\pDiamond} (w')$.  Then,
  $w'' \in R_{\scriptsize\pDiamond} (w)$.  Since
  $w \models M(a_{m+1})$, by ($1$) and ($2$), $w'' \models M(a_{m+2})$
  and thus, by ($3$), $w'' \not\models M(a_{m+1})$.  Last, since
  $w' \not\models p$, we obtain, by ($1$), $w' \models M(a_m)$,
  thereby proving the claim.

  Now, let $n \in \nat$ be given. By inductive hypothesis, there
  exists $t$ such that $w' \models P_t(a_n)$.  Since $0 \models A_8$,
  this implies that $w \models P_{t'}(a_n)$, for some $t' \in T$.
  Thus, ($4$) is proven.

  In view of ($5$), for every $m \in \nat$, we may pick an arbitrary
  world marked by $a_m \in D_0$ to be part of the sought tiling. For
  definiteness, let, for every $m \in \nat$,
  $$
  w_m = \min \{ w \in \nat : w \models M(a_m) \}.
  $$

  By ($4$), for every $n, m \in \nat$, there exists $t \in T$ such
  that $w_m \models P_t (a_n)$; it follows from $0 \models A_6$ that
  such $t$ is unique.  We can, therefore, define a function
  $f\colon \nat \times \nat \to T$ by
  $$
  \begin{array}{lcl}
    f(n, m) = t
    & \mbox{ whenever } & w_m \models P_t(a_n).
  \end{array}
  $$

  We next show that $f$ satisfies ($T_1$) through ($T_3$).

  Since $0 \models A_7$, the condition ($T_1$) is, evidently,
  satisfied.

  To see that ($T_2$) is satisfied, assume $f(n, m) = t$.  Then,
  $w_m \models P_t (a_n)$, by definition of $f$.  From the definition
  of $w_m$ we know that $w_m \models M(a_m)$.  Since $0 \models A_8$,
  if $v \geqslant w_m$ and $v \models M(a_{m+1})$, then
  $v \models P_{t'} (a_n)$, for some $t'$ with $up(t) = down(t')$.  We
  next show that $w_{m+1} \geqslant w_m$.  Assume otherwise: let
  $w_{m+1} < w_m$.  From the definition of $w_{m+1}$ we know that
  $w_{m+1} \models M(a_{m+1})$.  Therefore, by ($2$) and $(3)$,
  $w_m \in R_{\scriptsize\pDiamond} (w_{m+1})$.  Since
  $w_m \models M(a_m)$, there exists, by ($1$),
  $w' \in R_{\scriptsize\pDiamond}^2 (w_m)$ such that
  $w' \models M(a_{m+2})$.  Since $R_{\scriptsize\pDiamond}$ is
  transitive, $w' \in R_{\scriptsize\pDiamond}^2 (w_{m+1})$, in
  contradiction with the third clause of ($1$).  Thus, we have shown
  that $w_{m+1} \geqslant w_m$.  Hence,
  $w_{m+1} \models P_{t'} (a_n)$, for some $t'$ with
  $up(t) = down(t')$.  Therefore, ($T_2$) is satisfied.

  It remains to show that ($T_3$) is satisfied.  Since
  $0 \models M(a_0)$ and $0 \models A_9$, the set
  $\{ w \in \nat : w \not\models p \mbox{ and } w \models P_{t_0}
  (a_0)\}$ is infinite.  It follows from $0 \models M(a_0)$, ($1$) and
  ($2$) that, for every $w \in \nat$, if $w \not\models p$, then there
  exists $m \in \nat$ such that $w \models M(a_m)$.  Therefore, by
  ($5$), the set
  $\{ w_m : m \in \nat \mbox{ and } w_m \models P_{t_0} (a_0)\}$ is
  infinite.  Hence, ($T_3$) is satisfied.

  Thus, $f$ is a required function.

\begin{figure}
  \centering
  \begin{tikzpicture}[scale=1]

\coordinate (w0)   at (1, 1.0);
\coordinate (w1)   at (1, 2.50);
\coordinate (w2)   at (1, 4.00);
\coordinate (w3)   at (1, 5.50);
\coordinate (w0')  at (1, 1.75);
\coordinate (w1')  at (1, 3.25);
\coordinate (w2')  at (1, 4.75);
\coordinate (w3')  at (1, 6.25);

\coordinate (dots) at (1, 7.00);

\coordinate (a)    at (1, 0.25);

\draw [] (w0)  circle [radius=2.0pt] ;
\draw [] (w1)  circle [radius=2.0pt] ;
\draw [] (w2)  circle [radius=2.0pt] ;
\draw [] (w3)  circle [radius=2.0pt] ;
\draw [] (w0') circle [radius=2.0pt] ;
\draw [] (w1') circle [radius=2.0pt] ;
\draw [] (w2') circle [radius=2.0pt] ;
\draw [] (w3') circle [radius=2.0pt] ;

\draw [] (dots) node {$\vdots$};

\begin{scope}[>=latex]
\draw [->, shorten >= 2.0pt, shorten <= 2.0pt] (w0)  -- (w0');
\draw [->, shorten >= 2.0pt, shorten <= 2.0pt] (w0') -- (w1) ;
\draw [->, shorten >= 2.0pt, shorten <= 2.0pt] (w1)  -- (w1');
\draw [->, shorten >= 2.0pt, shorten <= 2.0pt] (w1') -- (w2) ;
\draw [->, shorten >= 2.0pt, shorten <= 2.0pt] (w2)  -- (w2');
\draw [->, shorten >= 2.0pt, shorten <= 2.0pt] (w2') -- (w3) ;
\draw [->, shorten >= 2.0pt, shorten <= 2.0pt] (w3)  -- (w3');
\end{scope}

\node [left=5pt] at (w0)   {$0$}     ;
\node [left=5pt] at (w0')  {$1$}     ;
\node [left=5pt] at (w1)   {$2$}     ;
\node [left=5pt] at (w1')  {$3$}     ;
\node [left=5pt] at (w2)   {$4$}     ;
\node [left=5pt] at (w2')  {$5$}     ;
\node [left=5pt] at (w3)   {$6$}     ;
\node [left=5pt] at (w3')  {$7$}     ;

\node [right=5pt] at (w0)  {$M(0)$}  ;
\node [right=5pt] at (w1)  {$M(1)$}  ;
\node [right=5pt] at (w2)  {$M(2)$}  ;
\node [right=5pt] at (w3)  {$M(3)$}  ;
\node [right=5pt] at (w0') {$p$}     ;
\node [right=5pt] at (w1') {$p$}     ;
\node [right=5pt] at (w2') {$p$}     ;
\node [right=5pt] at (w3') {$p$}     ;

\node [right = 40pt ] at (w0) {$P_{f(0,0)}(0)$};
\node [right = 95pt ] at (w0) {$P_{f(1,0)}(1)$};
\node [right = 150pt] at (w0) {$P_{f(2,0)}(2)$};
\node [right = 205pt] at (w0) {$P_{f(3,0)}(3)$};
\node [right = 265pt] at (w0) {$\ldots$}       ;

\node [right = 40pt ] at (w1) {$P_{f(0,1)}(0)$};
\node [right = 95pt ] at (w1) {$P_{f(1,1)}(1)$};
\node [right = 150pt] at (w1) {$P_{f(2,1)}(2)$};
\node [right = 205pt] at (w1) {$P_{f(3,1)}(3)$};
\node [right = 265pt] at (w1) {$\ldots$}       ;

\node [right = 40pt ] at (w2) {$P_{f(0,2)}(0)$};
\node [right = 95pt ] at (w2) {$P_{f(1,2)}(1)$};
\node [right = 150pt] at (w2) {$P_{f(2,2)}(2)$};
\node [right = 205pt] at (w2) {$P_{f(3,2)}(3)$};
\node [right = 265pt] at (w2) {$\ldots$}       ;

\node [right = 40pt ] at (w3) {$P_{f(0,3)}(0)$};
\node [right = 95pt ] at (w3) {$P_{f(1,3)}(1)$};
\node [right = 150pt] at (w3) {$P_{f(2,3)}(2)$};
\node [right = 205pt] at (w3) {$P_{f(3,3)}(3)$};
\node [right = 265pt] at (w3) {$\ldots$}       ;

\node [right = 60pt ] at (dots) {$\vdots$};
\node [right = 115pt] at (dots) {$\vdots$};
\node [right = 170pt] at (dots) {$\vdots$};
\node [right = 225pt] at (dots) {$\vdots$};

\node [right =  40pt + 20pt] at (a) {$0$};
\node [right =  65pt + 20pt] at (a) {$\lhd$};
\node [right =  95pt + 20pt] at (a) {$1$};
\node [right = 120pt + 20pt] at (a) {$\lhd$};
\node [right = 150pt + 20pt] at (a) {$2$};
\node [right = 175pt + 20pt] at (a) {$\lhd$};
\node [right = 205pt + 20pt] at (a) {$3$};
\node [right = 245pt + 20pt] at (a) {$\ldots$}       ;

\end{tikzpicture}

\caption{Model $\frak{M}_0$}
  \label{fig1}
\end{figure}

(``only if'') Suppose $f$ is a function satisfying ($T_1$) through
($T_3$).  We obtain a model based on $\langle \nat, \leqslant \rangle$
satisfying $A$.

  Let $\mathcal{D} = \nat \cup \{-1\}$ and $D(w) = \mathcal{D}$, for
  every $w \in \nat$.

  Let $\frak{M}_0 = \langle \nat, \leqslant, D, I \rangle$ be a model
  such that, for every $w \in \nat$ and every $a,b\in {\cal D}$,
  $$
  \begin{array}{lcl}
    \frak{M}_0,w \models a\triangleleft b
    & \leftrightharpoons &
                           \mbox{$w$ is even and $b=a+1$;} \smallskip \\
    \frak{M}_0,w \models p
    & \leftrightharpoons &
                           \mbox{$w$ is odd;} \smallskip \\
    \frak{M}_0,w \models M(a)
    & \leftrightharpoons &
                           \mbox{$w=2a$;} \smallskip \\
    \frak{M}_0,w \models P_t(a)
    & \leftrightharpoons &
                           \mbox{for some $m\in\nat$, both $w=2m$ and $f(a,m) = t$.}
  \end{array}
  $$

  It is straightforward to check that $\frak{M}_0, 0 \models A$, so we
  leave this to the reader.
\end{proof}

Thus, in the proof of the ``if'' part of
Lemma~\ref{lem:tiling-reduction}, we obtained a grid for the tiling by
treating the worlds of model $\frak{M}$ as rows and elements
$a_0, a_1, a_2, \ldots$ of the domain $D_0$ of the world $0$
satisfying $A$ as columns.

\subsection{Elimination of the binary predicate letter}
\label{sec:elim-binary}

We next eliminate, following ideas of Kripke's~\cite{Kripke62}, the
binary predicate letter $\triangleleft$ of formula $A$, without
increasing the number of individual variables in the resultant
formula.

From now on, we assume, for ease of notation, that $A$ contains
monadic predicate letters $P_0, \ldots, P_s$---rather than $P_t$, for
each $t \in \{t_0, \ldots, t_s \}$---to refer to the tile types.

Recall that Kripke's construction~\cite{Kripke62} transforms a model
$\frak{M}$ satisfying, at world $w$, a formula containing a binary
predicate letter, and no modal connectives, so that, for every pair of
elements of the domain of $w$, a fresh world accessible from $w$ is
introduced to $\frak{M}$.  This construction cannot be applied here in
a straightforward manner, for two reasons.

First, since we are working with the frame
$\langle \nat, \leqslant \rangle$, we may not introduce fresh worlds
to a model satisfying $A$; we, rather, have to use the worlds from
$\nat$ to simulate $\triangleleft$.  Second, since $\triangleleft$
occurs within the scope of the modal connective in $A$, we need to
simulate the interpretation of $\triangleleft$ not just at the world
satisfying $A$, but at every world accessible from it.

We resolve these difficulties by working with the model $\frak{M}_0$
defined in the ``only if'' part of the proof of
Lemma~\ref{lem:tiling-reduction}, rather than with an arbitrary model
satisfying $A$, and relying on $\frak{M}_0$ being based on a frame
with a constant domain and on the interpretation of $\triangleleft$
being identical at every world of $\frak{M}_0$.

Let $P_{s+1}$ and $P_{s+2}$ be monadic predicate letters distinct from
$M, P_0, \ldots, P_s$ and from each other,
and let $\cdot'$ be the function substituting
$$
\begin{array}{lcl}
\pDiamond (P_{s+1} (x) \con P_{s+2} (y))
  & \mbox{for}
  & x \triangleleft y.
\end{array}
$$

\begin{lemma}
  \label{lem:Kripke}
  There exists a recurrent tiling of\/ $\nat \times \nat$\/ satisfying
  \textup{($T_1$)} through \textup{($T_3$)} if, and only if,
  $\langle \nat, \leqslant \rangle \not\models \neg A'$.
\end{lemma}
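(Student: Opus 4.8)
The plan is to reduce Lemma~\ref{lem:Kripke} to Lemma~\ref{lem:tiling-reduction} via a semantic substitution lemma. For a model $\frak{M} = \langle \nat, \leqslant, D, I \rangle$, let $\frak{M}^{\triangleleft}$ be the model agreeing with $\frak{M}$ on $p$, $M$ and all $P_i$ and interpreting the binary letter $\triangleleft$ world by world by
$$
\begin{array}{lcl}
\frak{M}^{\triangleleft}, w \models a \triangleleft b
  & \leftrightharpoons
  & \frak{M}, w \models \pDiamond (P_{s+1} (a) \con P_{s+2} (b)) .
\end{array}
$$
Since the substituted formula $\pDiamond (P_{s+1} (x) \con P_{s+2} (y))$ has its free variables among $x,y$ and introduces no new individual variables, an induction on the construction of $A$ yields, for every $w \in \nat$,
$$
\begin{array}{lcl}
\frak{M}, w \models A'
  & \iff
  & \frak{M}^{\triangleleft}, w \models A ;
\end{array}
$$
the only delicate point is that in $A_8$ the variable $y$ of the substituted formula is captured by $\exists y$, which is exactly the intended reading. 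Note also that $A'$ contains only the two variables $x$ and $y$ and no binary letter, which is the purpose of the construction.

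For the ``if'' direction, suppose $\frak{M}, w_0 \models A'$; passing to a generated submodel we may take $w_0 = 0$, so $\frak{M}^{\triangleleft}, 0 \models A$ by the substitution lemma. The proof of the ``if'' direction of Lemma~\ref{lem:tiling-reduction} is stated throughout in terms of $\triangleleft^{I,w}$ and therefore applies to an \emph{arbitrary}, possibly world-dependent, interpretation of $\triangleleft$; applied to $\frak{M}^{\triangleleft}$ it produces a function $f$ satisfying \textup{($T_1$)} through \textup{($T_3$)}. Hence a recurrent tiling exists.

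For the ``only if'' direction, let $f$ satisfy \textup{($T_1$)} through \textup{($T_3$)} and take the model $\frak{M}_0$ from the ``only if'' part of Lemma~\ref{lem:tiling-reduction}. I would expand it to $\frak{M}_0'$ by choosing interpretations of $P_{s+1}$ and $P_{s+2}$ so that, for \emph{every} $w \in \nat$, the formula $\pDiamond (P_{s+1} (a) \con P_{s+2} (b))$ holds at $w$ exactly when $b = a+1$. The obstacle is that $\pDiamond$ is forward-looking: in $\frak{M}_0'$ the set $R_{\scriptsize\pDiamond}(w)$ is the set of even worlds exceeding $w$, so $\pDiamond (P_{s+1} (a) \con P_{s+2} (b))$ holds at $w$ iff some even $v > w$ carries both $P_{s+1}(a)$ and $P_{s+2}(b)$; this truth value is monotone as $w$ decreases and so cannot match the \emph{local} reading of $\triangleleft$ in $\frak{M}_0$, which is true only at even worlds. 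I therefore aim for the \emph{global} successor relation $\triangleleft^{*}$ (with $a \triangleleft^{*} b$ iff $b=a+1$ at every world). Replacing the local $\triangleleft$ of $\frak{M}_0$ by $\triangleleft^{*}$ preserves the truth of $A$ at $0$: the two relations coincide at every even world, and they differ only at odd worlds, where the occurrences of $\triangleleft$ outside every box (in $A_2$, $A_3$ and $A_4$, all evaluated at the even world $0$) are unaffected, and each occurrence within the scope of a box (in $A_3$, $A_7$, $A_8$) stands under an antecedent---$\exists x\, M(x)$, $P_t(x)$, and $M(x)$ together with $\exists y\,(x \triangleleft y \con M(y))$, respectively---that is false at every odd world, so the extra pairs contributed by $\triangleleft^{*}$ there change nothing.

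To realise $\triangleleft^{*}$ I would enumerate the successor pairs $\langle a, a+1\rangle$, $a \in \mathcal{D}$, and assign to each even world $v$ a single such pair $\langle a_v, a_v+1\rangle$ so that every pair is assigned to cofinally many even worlds; then set $P_{s+1}^{I,v} = \{a_v\}$ and $P_{s+2}^{I,v} = \{a_v+1\}$ at each even $v$ (and the empty set at odd worlds, which $\pDiamond$ never inspects). Each even world then carries exactly one pair, always of the form $\langle a, a+1\rangle$, so no world witnesses $P_{s+1}(a) \con P_{s+2}(b)$ with $b \neq a+1$, forcing $\pDiamond (P_{s+1}(a) \con P_{s+2}(b))$ false at every $w$ in that case; while cofinality makes $\pDiamond (P_{s+1}(a) \con P_{s+2}(a+1))$ true at every $w$. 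Hence the interpretation of $\triangleleft$ in $(\frak{M}_0')^{\triangleleft}$ is exactly $\triangleleft^{*}$, so $(\frak{M}_0')^{\triangleleft}, 0 \models A$, whence $\frak{M}_0', 0 \models A'$ by the substitution lemma. The hard part is precisely this cofinal encoding, which makes the forward-looking operator $\pDiamond$ define a fixed successor relation uniformly at all worlds; the remaining ingredients---the substitution lemma, the case analysis licensing the passage to $\triangleleft^{*}$, and the re-checking of $A_0$ through $A_9$ in $\frak{M}_0'$ (which reduces to the verification already left to the reader in Lemma~\ref{lem:tiling-reduction})---are routine bookkeeping.
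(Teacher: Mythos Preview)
Your proposal is correct and takes essentially the same approach as the paper: the ``if'' direction in both cases observes that $\pDiamond(P_{s+1}(x)\wedge P_{s+2}(y))$ can play the role of $x\triangleleft y$ in the argument of Lemma~\ref{lem:tiling-reduction}, and the ``only if'' direction hinges on a cofinal enumeration of successor pairs over the even worlds---the paper makes this explicit via the sequence $\alpha=0,\,0,1,\,0,1,2,\ldots$, which is precisely an instance of your abstract cofinal assignment $v\mapsto\langle a_v,a_v+1\rangle$. Your substitution-lemma framing and the intermediate passage to the globally constant relation $\triangleleft^{*}$ spell out what the paper compresses into the phrase ``it suffices to prove [the equivalence at even worlds]''; incidentally, your care to enumerate all of~$\mathcal{D}$, including~$-1$, is actually needed for $A'_2$ to hold at $x=-1$, a point the paper's choice of~$\alpha$ glosses over.
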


\begin{proof} (``if'') Suppose $\frak{M}, w_0 \models A'$, for some
  model $\frak{M} = \langle \nat, \leqslant, D, I \rangle$ and some
  world $w_0$, which can be assumed to be $0$.

  The argument is essentially the same as in the proof of the ``if''
  part of Lemma~\ref{lem:tiling-reduction}.  The only,
  inconsequential, difference is that
  $\pDiamond (P_{s+1} (x) \con P_{s+2} (y))$ now plays the role of
  $x \triangleleft y$: for every $w \in \nat$, the relation
  $I(w, \triangleleft) \subseteq D_w \times D_w$ is replaced by the
  relation
  $$
  \{ \langle a, b \rangle \in D_w \times D_w : \frak{M}, w \models
  \pDiamond (P_{s+1} (a) \con P_{s+2} (b))\}.
  $$
  Since $\frak{M}, 0 \models A'$, the two relations are
  indistinguishable, for every $w \in \nat$, with respect to the
  properties we rely on in the proof.

  (``only if'') Suppose $f$ is a function satisfying ($T_1$) through
  ($T_3$).  Let $\frak{M}_0 = \langle \nat, \leqslant, D, I \rangle$
  be the model defined in the ``only if'' part of the proof of
  Lemma~\ref{lem:tiling-reduction}.  As we have seen,
  $\frak{M}_0, 0 \models A$.  We use $\frak{M}_0$ to obtain a model
  satisfying $A'$.

  Let $\alpha$ be the infinite sequence
  $$
  0, \, \, 0, 1, \, \, 0, 1, 2, \, \, 0, 1, 2, 3,  \, \, 0, 1, 2, 3,
  4,\, \,  \ldots
  $$
  and let $\alpha_k$, for each $k \in \nat$, be the $k$th element of
  $\alpha$.

  Let $\frak{M}'_0 = \langle \nat, \leqslant, D, I' \rangle$ be a
  model such that, for every $w, c \in \nat$,
  $$
  \begin{array}{lcl}
    \frak{M}'_0, w \models P_{s+1} (c)
    & \leftrightharpoons &
    \mbox{for some $m\in\nat$, both $w=2m$ and $c=\alpha_m$;}
    \medskip \\
    \frak{M}'_0, w \models P_{s+2} (c)
    & \leftrightharpoons &
    \mbox{for some $m\in\nat$, both $w=2m$ and $c=\alpha_m+1$,}
  \end{array}
  $$
  and for every $w \in \nat$ and every
  $S \in \{P_0, \ldots, P_s, M, p\}$,
  $$
  \begin{array}{lcl}
  I'(w,S) & = & I(w,S).
  \end{array}
  $$

  We show that $\frak{M}'_0, 0 \models A'$.

  Since $\frak{M}_0, 0 \models A$, it suffices to prove that, for
  every $m \in \nat$ and every $a, b \in \mathcal{D}$,
  $$
  \begin{array}{lcl}
  \frak{M}_0, 2m \models a \triangleleft b
    & \iff
    & \frak{M}'_0, 2m \models \pDiamond (P_{s+1} (a) \con P_{s+2}
  (b)).
  \end{array}
  $$

  Assume $\frak{M}_0, 2m \models a \triangleleft b$.  Then,
  $b = a + 1$, by definition of $\frak{M}_0$.  Choose $k \in \nat$ so
  that $k > m$ and $\alpha_k = a$; by definition of $\alpha$, such a
  number $k$ certainly exists.  By definition of $\frak{M}'_0$, both
  $\frak{M}'_0, 2k \not\models p$ and
  $\frak{M}'_0, 2k \models P_{s+1} (a) \con P_{s+2} (b)$.  By the same
  definition, $\frak{M}'_0, 2k - 1 \models p$.  Hence,
  $\frak{M}'_0, 2m \models \pDiamond (P_{s+1} (a) \con P_{s+2} (b))$.

  Conversely, assume
  $\frak{M}'_0, 2m \models \pDiamond (P_{s+1} (a) \con P_{s+2} (b))$.
  Then, for some $v > 2m$, both $\frak{M}'_0, v \not\models p$ and
  $\frak{M}'_0, v \models P_{s+1} (a) \con P_{s+2} (b)$.  By
  definition of $\frak{M}'_0$, we have
  $\frak{M}_0, v \not\models p$; hence $v = 2k$, for some $k > m$.
  Also by definition of $\frak{M}'_0$, both $a = \alpha_k$ and
  $b = \alpha_k + 1$; hence, $b = a + 1$.  Therefore,
  $\frak{M}_0, 2m \models a \triangleleft b$, by definition of
  $\frak{M}_0$.~
\end{proof}

\subsection{Elimination of monadic predicate letters}
\label{sec:elim-monadic}

We lastly simulate the occurrences of letters
$p, M, P_0, \ldots, P_{s+2}$ in $A'$ with one monadic and one
proposition letter, without increasing the number of individual
variables in the resultant formula.

Let $P$ be a monadic letter distinct from
$M, P_0, \ldots, P_{s+2}$, and let $q$ be a proposition letter
distinct from $p$.

For a formula $\vp$ in the language containing $P$ and $q$, define
$$
\begin{array}{c}
  \PDiamond\varphi = \Diamond(\forall x\, P(x) \con \Diamond( \neg \forall x\, P(x) \con\varphi));
  \medskip\\
  \PDiamond^0 \varphi = \varphi; \quad \PDiamond^{n+1} \varphi =
  \PDiamond\PDiamond^n\varphi, \mbox{ for every } n \in \nat.
\end{array}
$$

Define, for every $n \in \{0, \ldots, s + 2\}$,
$$
\begin{array}{rcl}
  \beta_n (x) & = & \exists y\, \big( \PDiamond^{s+4} (q \con P(y)) \con \neg
                    \PDiamond^{s+5} (q \con P(y))\, \con \\
              & & \phantom{\mu \con P(x) \con \exists y\, [} \PDiamond (\PDiamond^{n+1} (q \con P(y)) \con \neg
                  \PDiamond^{n+2} (q \con P(y)) \con P(x)) \big);\medskip\\
  \beta_n (y) & = & \exists x\, \big( \PDiamond^{s+4} (q \con P(x)) \con \neg
                    \PDiamond^{s+5} (q \con P(x))\, \con \\
              & & \phantom{\mu \con P(y) \con \exists x\, [} \PDiamond (\PDiamond^{n+1} (q \con P(x)) \con \neg
                  \PDiamond^{n+2} (q \con P(x)) \con P(y)) \big).\\
\end{array}
$$

Let $\cdot^\ast$ be the function replacing
\begin{itemize}
\item $P_n(x)$ with $\beta_n (x)$, for every
  $n \in \{0, \ldots, s + 2\}$;
\item $P_n(y)$ with $\beta_n (y)$, for every
  $n \in \{0, \ldots, s + 2\}$;
\item $M(x)$ with $q \con P(x)$;
\item $M(y)$ with $q \con P(y)$.
\end{itemize}

Let $A_i^\ast$, for each $i$ with $0 \leqslant i \leqslant 8$ and
$i \ne 4$, be the result of applying the function $\cdot^\ast$ to
$A'_i$.  Also, let
$$
\begin{array}{lcl}
  A_4^{\ast} & = & \forall x \forall y\, \big( \PDiamond (\beta_{s+1}
                   (x) \con \beta_{s+2} (y) )\, \imp \smallskip\\
             & & \phantom{\forall x \forall y\, ( \PDiamond (\beta_{s+1}}\Box (q \con P(x) \equivalence
                 \neg \forall x\, P(x) \con \PDiamond^{s+4} ( q \con P(y)) \con \neg
                 \PDiamond^{s+5} (q \con P(y))) \big),
\end{array}
$$
and
$$
\begin{array}{lcl}
A^\ast_{9} & = & \forall x\, (q \con P(x) \imp \Box \PDiamond \beta_{0} (x)).
\end{array}
$$

Lastly, let $A^\ast$ be the conjunction of $A_0^\ast$ through
$A_9^\ast$.  Observe that $A^\ast$ contains only two individual
variables, a monadic letter $P$ and a proposition letter $q$.

We shall show that $A^\ast$ is satisfiable if, and only if, there
exists a recurrent tiling satisfying ($T_1$) through ($T_3$).

\begin{figure}
  \centering
  \begin{tikzpicture}[scale=1]

\coordinate (w0)   at (1, -0.5);
\coordinate (ws)   at (1, 1.5);
\coordinate (ws`)  at (1, 2.75);
\coordinate (vsn2) at (1, 3.50);
\coordinate (vsn2`) at (1, 4.25);
\coordinate (vsn1) at (1, 5.0);
\coordinate (vsn1`) at (1, 5.75);
\coordinate (vsn)  at (1, 6.50);
\coordinate (vs1)  at (1, 7.50);
\coordinate (vs1`)  at (1, 8.25);
\coordinate (vs0)  at (1, 9.0);
\coordinate (vs0`)  at (1, 9.75);
\coordinate (ws')  at (1, 11.00);

\coordinate (dots)  at (1, 11.8);
\coordinate (dots0) at (1, 0.6);
\coordinate (dotsv) at (1, 7.1);

\draw [] (w0)    circle [radius=2.0pt] ;
\draw [] (ws)    circle [radius=2.0pt] ;
\draw [] (ws')   circle [radius=2.0pt] ;
\draw [] (vs0)   circle [radius=2.0pt] ;
\draw [] (vs1)   circle [radius=2.0pt] ;
\draw [] (vsn)   circle [radius=2.0pt] ;
\draw [] (vsn1)  circle [radius=2.0pt] ;
\draw [] (vsn2)  circle [radius=2.0pt] ;

\draw [] (ws`)   circle [radius=2.0pt] ;
\draw [] (vs0`)  circle [radius=2.0pt] ;
\draw [] (vs1`)  circle [radius=2.0pt] ;
\draw [] (vsn1`) circle [radius=2.0pt] ;
\draw [] (vsn2`) circle [radius=2.0pt] ;

\draw (dots)  node {$\vdots$};
\draw (dots0) node {$\vdots$};
\draw (dotsv) node {$\vdots$};

\begin{scope}[>=latex]
\draw [->, shorten >= 2pt, shorten <= 2pt] (ws)    -- (ws`)  ;
\draw [->, shorten >= 2pt, shorten <= 2pt] (ws`)   -- (vsn2) ;
\draw [->, shorten >= 2pt, shorten <= 2pt] (vsn2)  -- (vsn2`);
\draw [->, shorten >= 2pt, shorten <= 2pt] (vsn2`) -- (vsn1) ;
\draw [->, shorten >= 2pt, shorten <= 2pt] (vsn1)  -- (vsn1`);
\draw [->, shorten >= 2pt, shorten <= 2pt] (vsn1`) -- (vsn)  ;
\draw [->, shorten >= 2pt, shorten <= 2pt] (vs1)   -- (vs1`) ;
\draw [->, shorten >= 2pt, shorten <= 2pt] (vs1`)  -- (vs0)  ;
\draw [->, shorten >= 2pt, shorten <= 2pt] (vs0)   -- (vs0`) ;
\draw [->, shorten >= 2pt, shorten <= 2pt] (vs0`)  -- (ws')  ;
\end{scope}

\node [left = 5pt] at (w0)    {$w_0$}              ;
\node [left = 5pt] at (ws)    {$w_m$}              ;
\node [left = 5pt] at (ws')   {$w_{m+1}$}          ;
\node [left = 5pt] at (vs0)   {$v_m^0$}            ;
\node [left = 5pt] at (vs1)   {$v_m^1$}            ;
\node [left = 5pt] at (vsn)   {$v_m^s$}            ;
\node [left = 5pt] at (vsn1)  {$v_m^{s+1}$}        ;
\node [left = 5pt] at (vsn2)  {$v_m^{s+2}$}        ;
\node [left = 5pt] at (ws`)   {$\bar w_m$}         ;
\node [left = 5pt] at (vs0`)  {$\bar v_m^0$}       ;
\node [left = 5pt] at (vs1`)  {$\bar v_m^1$}       ;
\node [left = 5pt] at (vsn1`) {$\bar v_m^{s+1}$}   ;
\node [left = 5pt] at (vsn2`) {$\bar v_m^{s+2}$}   ;

\node [right=5pt]  at (w0)    {$q$}                ;
\node [right=5pt]  at (ws)    {$q$}                ;
\node [right=5pt]  at (ws')   {$q$}                ;
\node [right=20pt] at (w0)    {$P(0)$}             ;
\node [right=20pt] at (ws)    {$P(m)$}             ;
\node [right=20pt] at (ws')   {$P(m+1)$}           ;
\node [right=5pt]  at (ws`)   {$\forall x\, P(x)$} ;
\node [right=5pt]  at (vs0`)  {$\forall x\, P(x)$} ;
\node [right=5pt]  at (vs1`)  {$\forall x\, P(x)$} ;
\node [right=5pt]  at (vsn1`) {$\forall x\, P(x)$} ;
\node [right=5pt]  at (vsn2`) {$\forall x\, P(x)$} ;

\node [right = 40pt  + 15pt + 25pt] at (w0) {$\beta_{f(0,\,\,0)}(0)$};
\node [right =  95pt + 15pt + 35pt] at (w0) {$\beta_{f(1,\,\,0)}(1)$};
\node [right = 150pt + 15pt + 45pt] at (w0) {$\beta_{f(2,\,\,0)}(2)$};
\node [right = 205pt + 15pt + 55pt] at (w0) {$\beta_{f(3,\,\,0)}(3)$};
\node [right = 275pt + 15pt + 65pt] at (w0) {$\ldots$}               ;

\node [right =  40pt + 15pt + 25pt] at (ws) {$\beta_{f(0,m)}(0)$}    ;
\node [right =  95pt + 15pt + 35pt] at (ws) {$\beta_{f(1,m)}(1)$}    ;
\node [right = 150pt + 15pt + 45pt] at (ws) {$\beta_{f(2,m)}(2)$}    ;
\node [right = 205pt + 15pt + 55pt] at (ws) {$\beta_{f(3,m)}(3)$}    ;
\node [right = 275pt + 15pt + 65pt] at (ws) {$\ldots$}               ;

\node [right =  40pt + 25pt + 25pt] at (ws') {$\beta_{f(0,m+1)}(0)$} ;
\node [right =  95pt + 25pt + 35pt] at (ws') {$\beta_{f(1,m+1)}(1)$} ;
\node [right = 150pt + 25pt + 45pt] at (ws') {$\beta_{f(2,m+1)}(2)$} ;
\node [right = 205pt + 25pt + 55pt] at (ws') {$\beta_{f(3,m+1)}(3)$} ;
\node [right = 275pt + 15pt + 65pt] at (ws') {$\ldots$}              ;

\node [right = 20pt] at (vs0)  {$P(n) \iff f(n,m) = t_0$};
\node [right = 20pt] at (vs1)  {$P(n) \iff f(n,m) = t_1$};
\node [right = 20pt] at (vsn)  {$P(n) \iff f(n,m) = t_s$};
\node [right = 20pt] at (vsn1) {$P(\alpha_m)$};
\node [right = 20pt] at (vsn2) {$P(\alpha_m + 1)$};


\end{tikzpicture}

\caption{Model $\frak{M}_0^\ast$}
  \label{fig2}
\end{figure}

To obtain a model satisfying $A^\ast$, we ``stretch out'' the model
$\frak{M}'_0$ defined in the ``only if'' part of the proof of
Lemma~\ref{lem:Kripke} to include ``additional'' worlds whose sole
purpose is to simulate the interpretation of letters
$P_0, \ldots, P_{s+2}$ at worlds of $\frak{M}'_0$.  We ``insert''
$s+3$ worlds between worlds $m$ and $m + 1$ to simulate the
interpretation of letters $P_0, \ldots, P_{s+2}$ at $m$.  The
interpretation of $P_n$, for each $n \in \{0, \ldots, s+2\}$, at $m$
is simulated by the interpretation of letter $P$ at a newly inserted
world ``$n$ steps away from'' $m+1$.  To be able to step through the
newly defined model, we also ``insert'' extra worlds satisfying
$\forall x\, P(x)$; these play the same role the worlds satisfying $p$
played in $\frak{M}'_0$. The proposition letter $q$ marks off the
``old'' worlds from $\frak{M}'_0$. The resultant model is depicted in
Figure~\ref{fig2}, where $\beta_{f(a,b)}(x)$ stands for $\beta_n (x)$,
where $n$ is such that $f(a,b)=t_n$.

\begin{lemma}
  \label{lem:monadic}
  There exists a recurrent tiling of\/ $\nat \times \nat$\/ satisfying
  \textup{($T_1$)} through \textup{($T_3$)} if, and only if,
  $\langle \nat, \leqslant \rangle \not\models \neg A^\ast$.
\end{lemma}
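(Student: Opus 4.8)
The plan is to establish the two implications separately, in close parallel with Lemmas~\ref{lem:tiling-reduction} and~\ref{lem:Kripke}. The organising device is the irreflexive, transitive relation $R_{\scriptsize\PDiamond}$, defined exactly as $R_{\scriptsize\pDiamond}$ was but with $\forall x\, P(x)$ in place of $p$: that is, $w\, R_{\scriptsize\PDiamond}\, v$ iff $v \not\models \forall x\, P(x)$ and $u \models \forall x\, P(x)$ for some $u$ with $w \leqslant u \leqslant v$. The translation $\cdot^\ast$ should be read as sending $p$ to $\forall x\, P(x)$ as well, so that each $\pDiamond$ becomes $\PDiamond$, the sole exception being $A_4^\ast$, where a single $\pDiamond$-step to the next mark is replaced by an $(s+4)$-step $\PDiamond$-jump. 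The whole proof hinges on one counting fact: since $R_{\scriptsize\PDiamond}$ is transitive, $\PDiamond^{k+1}\psi$ implies $\PDiamond^{k}\psi$, so a guard of the form $\PDiamond^{k}\psi \con \neg \PDiamond^{k+1}\psi$ isolates a world lying exactly $k$ steps of $R_{\scriptsize\PDiamond}$ ahead; this is the mechanism by which the formulas $\beta_n$ recover the eliminated letters $P_n$.

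For the \emph{only if} direction I would take a function $f$ satisfying ($T_1$)--($T_3$) and the model $\frak{M}'_0$ of Lemma~\ref{lem:Kripke}, for which $\frak{M}'_0, 0 \models A'$, and ``stretch'' it into the model $\frak{M}_0^\ast$ of Figure~\ref{fig2}: between consecutive old ($q$-)worlds $w_m$ and $w_{m+1}$ I insert the $s+3$ encoding worlds $v_m^{s+2}, \ldots, v_m^0$, carrying the interpretation of $P_{s+2}, \ldots, P_0$ at $w_m$, with fresh separator worlds satisfying $\forall x\, P(x)$ between them to play the part of the $p$-worlds of $\frak{M}'_0$. Since separators and non-separators then strictly alternate, $\PDiamond^k$ counts non-separator worlds exactly. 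The crux is the \emph{simulation claim}: at every old world $w_m$ and every domain element $a$, one has $\frak{M}_0^\ast, w_m \models q \con P(a) \iff \frak{M}'_0, 2m \models M(a)$ and $\frak{M}_0^\ast, w_m \models \beta_n(a) \iff \frak{M}'_0, 2m \models P_n(a)$. Here the outer guard of $\beta_n$ pins the witness $y$ to the mark of the world exactly $s+4$ steps above $w_m$, namely $w_{m+1}$ (the row-to-row distance), and the inner guard pins the relevant encoding world to the one exactly $n+1$ steps below $w_{m+1}$, namely $v_m^n$, at which $P(a)$ records $f(a,m)=t_n$. Granting the claim, I deduce $\frak{M}_0^\ast, w_0 \models A^\ast$ conjunct by conjunct from $\frak{M}'_0, 0 \models A'$; the only non-routine match is $A_4^\ast$, whose $\PDiamond^{s+4}/\PDiamond^{s+5}$ counting is exactly what one $\pDiamond/\pDiamond^2$ step of $A_4'$ becomes after stretching.

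For the \emph{if} direction I would argue, as in Lemma~\ref{lem:Kripke}, that the extraction of the tiling in Lemma~\ref{lem:tiling-reduction} goes through once $\forall x\, P(x)$ is read for $p$, the relation $R_{\scriptsize\PDiamond}$ for $R_{\scriptsize\pDiamond}$, $q \con P(x)$ for $M(x)$, $\beta_n$ for $P_n$, and $\PDiamond(\beta_{s+1}(x) \con \beta_{s+2}(y))$ for $x \triangleleft y$. Because $A_0^\ast, \ldots, A_9^\ast$ are the $\cdot^\ast$-images of $A_0', \ldots, A_9'$---with $A_4^\ast$ encoding the passage to the next mark by an $(s+4)$-step $\PDiamond$-jump---any model of $A^\ast$ satisfies the exact analogues of conditions ($1$)--($5$), now with the mark advancing every $s+4$ steps of $R_{\scriptsize\PDiamond}$ rather than every single step; from these the function $f$ and the verification of ($T_1$)--($T_3$) follow precisely as before.

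The step I expect to be the main obstacle is the bookkeeping behind the simulation claim. One must verify that strict alternation of the $\forall x\, P(x)$-worlds forces $R_{\scriptsize\PDiamond}$ to count exactly, that the two nested $\PDiamond^{k}\con\neg\PDiamond^{k+1}$ guards in $\beta_n$ lock $y$ onto the next old world and the encoding world onto $v_m^n$ with no off-by-one slippage, and---a point easy to overlook---that the $\beta_n$ formulas do not fire spuriously at the newly inserted worlds in a way that would corrupt the guard $\PDiamond(\beta_{s+1}(x)\con\beta_{s+2}(y))$ standing for $\triangleleft$. An error in any of the constants $s+4$, $s+5$, $n+1$, $n+2$ would either collapse distinct encoding worlds or let a witness escape to a later row, breaking the equivalence $\beta_n \leftrightarrow P_n$ on which both directions rest.
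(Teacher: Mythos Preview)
Your proposal is correct and follows essentially the same route as the paper: the ``only if'' direction stretches $\frak{M}'_0$ into $\frak{M}_0^\ast$ and verifies the simulation claim (the paper's facts (6)--(8)), while the ``if'' direction defers to Lemma~\ref{lem:Kripke} under the indicated substitutions. You are in fact more careful than the paper on one point---explicitly sending $p$ to $\forall x\,P(x)$ so that $\pDiamond$ becomes $\PDiamond$, which the paper's definition of $\cdot^\ast$ omits but tacitly requires---and your worry about $\beta_n$ firing spuriously at inserted worlds, while sound instinct, turns out to be moot: no $q$-world lies at $R_{\scriptsize\PDiamond}$-distance exactly $s+4$ from any $v_m^k$, so the outer guard of $\beta_n$ fails there.
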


\begin{proof}
  (``if'')   Suppose $\frak{M}, w_0 \models A^\ast$, for some
  model $\frak{M} = \langle \nat, \leqslant, D, I \rangle$ and some
  world $w_0$, which can be assumed to be $0$.

  The argument is essentially the same as in the proof of the ``if''
  part of Lemma~\ref{lem:Kripke}, the only difference being that we
  use
  \begin{itemize}
  \item $\beta_n(x)$ instead of $P_n(x)$;
  \item $\beta_n(y)$ instead of $P_n(y)$;
  \item $q \con P(x)$ instead of $M(x)$;
  \item $q \con P(y)$ instead of $M(y)$.
  \end{itemize}
  
  (``only if'') Suppose $f$ is a function satisfying ($T_1$) through
  ($T_3$).  Let $\frak{M}'_0 = \langle \nat, \leqslant, D, I' \rangle$
  be the model defined in the ``only if'' part of the proof of
  Lemma~\ref{lem:Kripke}. As we have seen,
  $\frak{M}'_0, 0 \models A'$.  We use $\frak{M}'_0$ to obtain a model
  satisfying $A^{\ast}$.

  We think of the worlds from $\nat$ as being labeled, in the
  ascending order,
  $$
  \begin{array}{l}
  w_0^{\phantom{i}}, \bar{w}_0^{\phantom{i}}, v^{s+2}_0,
  \bar{v}^{s+2}_0, \ldots, v^0_0, \bar{v}^0_0, \\
  w_1^{\phantom{i}}, \bar{w}_1^{\phantom{i}}, v^{s+2}_1,
  \bar{v}^{s+2}_1, \ldots, v^0_1, \bar{v}^0_1, \\
    w_2^{\phantom{i}}, \bar{w}_2^{\phantom{i}}, \ldots\, ,
  \end{array}
  $$
  i.e., we put $w_0 = 0$, $\bar{w}_0^{\phantom{i}} = 1$,
  $v^{s+2}_0 = 2$, etc.

  Let $\frak{M}_0^\ast = \langle \nat, \leqslant, D, I^\ast \rangle$
  be a model such that, for every $u \in
  \nat$
  $$
  \begin{array}{lcll}
    \frak{M}_0^\ast, u \models q
      & \leftrightharpoons
      & \mbox{$u = w_m$, for some $m \in \nat$,}
  \end{array}
  $$
  and for every $u \in \nat$ and every
  $a \in \mathcal{D}$, the relation $\frak{M}_0^\ast, u \models P(a)$
  holds if, and only if, one of the following conditions is satisfied:
  \begin{itemize}
  \item \mbox{$u = w_m$ and $\frak{M}'_0, 2m \models M(a)$, for some
      $m \in \nat$};
  \item \mbox{$u = v^n_m$ and $\frak{M}'_0, 2m \models P_n(a)$, for
      some $m \in \nat$ and some $n \in \{0, \ldots, s+2\}$;}
  \item \mbox{$u = \bar{w}_m$, for some $m \in \nat$;}
  \item \mbox{$u = \bar{v}^n_m$, for some $m \in \nat$
                and some $n \in \{0, \ldots, s+2\}$.}
  \end{itemize}

  Thus, by definition of $\frak{M}^\ast_0$,
  $$
  \begin{array}{lcl}
    \frak{M}^\ast_0, w_m \models q \con P(a) & \iff & a = m.
  \end{array}
  \eqno (6)
  $$

  We now prove that $\frak{M}_0^\ast, w_0 \models A^{\ast}$.

  First, we show that
  $$
  \begin{array}{lcl}
  \frak{M}^\ast_0, u \models \forall x\, P(x)
  & \iff &
  u \in \{ \bar{w}_m : m \in \nat \} \cup \{
  \bar{v}^n_m : m \in \nat, 0 \leqslant n \leqslant s
  + 2 \}.
  \end{array}
  \eqno (7)
  $$
  The right-to-left implication is immediate from the definition of
  $\frak{M}^\ast_0$.

  For the converse, assume
  $u \notin \{ \bar{w}_m : m \in \nat \} \cup \{ \bar{v}^n_m
  : m \in \nat, 0 \leqslant n \leqslant s + 2 \}$.

  We have four cases to consider.

  Case $u = w_m$: The definition of $\frak{M}^\ast_0$ implies that
  $\frak{M}^\ast_0, w_m \not\models P(c)$, for every
  $c \in \mathcal{D} - \{m\}$.  Since
  $\mathcal{D} - \{m\} \ne \varnothing$, we obtain
  $\frak{M}^\ast_0, w_m \not\models \forall x\, P(x)$.

  Case $u = v^{s+1}_m$: The definition of $\frak{M}^\ast_0$ implies
  that $\frak{M}^\ast_0, v^{s+1}_m \models P(a)$ if, and only if,
  $\frak{M}'_0, 2m \models P_{s+1} (a)$, which by definition of
  $\frak{M}'_0$, holds if, and only if,
  $\frak{M}_0, 0 \models a \triangleleft b$ and $\alpha_m = a$.
  Therefore, $\frak{M}^\ast_0, v^{s+1}_m \not\models P(c)$, for every
  $c \in \mathcal{D} -\{\alpha_m\}$. Since
  $\mathcal{D} - \{\alpha_m\} \ne \varnothing$, we obtain
  $\frak{M}^\ast_0, v^{s+1}_m \not\models \forall x\, P(x)$.

  Case $u = v^{s+2}_m$: The definition of $\frak{M}^\ast_0$ implies
  that $\frak{M}^\ast_0, v^{s+2}_m \models P(a)$ if, and only if,
  $\frak{M}'_0, 2m \models P_{s+2} (a)$, which by definition of
  $\frak{M}'_0$, holds if, and only if, $a = \alpha_m +1$.  Therefore,
  $\frak{M}^\ast_0, v^{s+2}_m \not\models P(c)$, for every
  $c \in \mathcal{D} -\{\alpha_m + 1\}$. Since
  $\mathcal{D} - \{\alpha_m + 1\} \ne \varnothing$, we obtain
  $\frak{M}^\ast_0, v^{s+2}_m \not\models \forall x\, P(x)$.

  Case $u = v^{n}_m$, where $n \in \{0, \ldots, s\}$: By definitions
  of $\frak{M}^\ast_0$ and $\frak{M}'_0$,
  $$
  \begin{array}{lclcl}
  \frak{M}^\ast_0, v^{n}_m \models P(a)
  & \iff &
  \frak{M}'_0, 2m \models P_{n} (a)
  & \iff &
  \frak{M}_0, 2m \models P_{n} (a).
  \end{array}
  $$
  The definition of $\frak{M}_0$ implies that
  $\frak{M}_0, 2m \not\models P_n(-1)$.  Therefore,
  $\frak{M}^\ast_0, v^{n}_m \not\models P(-1)$; hence,
  $\frak{M}^\ast_0, v^{n}_m \not\models \forall x\, P(x)$.

  Thus, $\frak{M}^\ast_0, u \not\models \forall x\, P(x)$, and so
  ($7$) is proven.

  Next, we show that, for every $m \in \nat$, every
  $n \in \{0, \ldots, s+2\}$ and every $a \in \mathcal{D}$,
  $$
  \begin{array}{lcl}
     \frak{M}_0^\ast, w_m\models\beta_n(a) & \iff & \frak{M}'_0, 2m \models P_n(a).
  \end{array}
  \eqno (8)
  $$

  First, define a binary relation ${R}_{\scriptsize\PDiamond}$ on
  $\nat$ by
  $$
  \begin{array}{lcl}
  w R_{\scriptsize\PDiamond} v
    & \leftrightharpoons
    & v \not\models
  \forall x\, P(x) \mbox{ and, for some $u \in \nat$, both } w
  \leqslant u \leqslant v \mbox{ and } u \models \forall x\, P(x).
  \end{array}
  $$

  Now, assume $\frak{M}'_0, 2m \models P_n(a)$.

  By~($6$), $\frak{M}_0^\ast, w_{m+1} \models q \con P(m+1)$.
  By~($7$) and the definition of $\frak{M}_0^\ast$,
  \begin{itemize}
  \item $w_{m+1} \in R^{s+4}_{\scriptsize\PDiamond}(w_m) - R^{s+5}_{\scriptsize\PDiamond}(w_m)$;
  \item $w_m < v^n_m$;
  \item $w_{m+1} \in R^{n+1}_{\scriptsize\PDiamond}(v^n_m) - R^{n+2}_{\scriptsize\PDiamond}(v^n_m)$;
  \item $\frak{M}_0^\ast, v^n_m \models P(a)$.
  \end{itemize}
  Therefore, $\frak{M}_0^\ast, w_m \models \beta_n (a)$.

  Conversely, assume $\frak{M}_0^\ast, w_m \models \beta_n(a)$.

  Then,
  $$
  \frak{M}_0^\ast, w_m \models \exists y\, (\PDiamond^{s+4} (q \con
  P(y)) \con \neg \PDiamond^{s+5} (q \con P(y))).$$ Hence, there exist
  $u \in \nat$ and $b \in \mathcal{D}$ such that
  $$
  \frak{M}_0^\ast, u \models q \con P(b) \mbox{ and }
  u \in R^{s+4}_{\scriptsize\PDiamond}(w_m) -
  R^{s+5}_{\scriptsize\PDiamond}(w_m).
  $$
  By definition of $\frak{M}^\ast_0$ and by ($6$), the only choices
  for $u$ and $b$ are, respectively, $w_{m+1}$ and $m+1$.  Hence,
  $$
  \frak{M}_0^\ast, w_m \models \PDiamond (\PDiamond^{n+1} (q \con
  P(m+1)) \con \neg \PDiamond^{n+2} (q \con P(m+1)) \con P(a)).
  $$
  Thus, by definition of $\frak{M}_0^\ast$, we obtain
  $\frak{M}_0^\ast, v^n_m \models P(a)$ and, hence,
  $\frak{M}'_0, 2m \models P_n (a)$.  Thus, ($8$) is proven.

  From ($6$), ($7$) and ($8$), we obtain
  $\frak{M}_0^\ast, w_0 \models A_i^{\ast}$, for each $i$ with
  $0 \leqslant i \leqslant 8$ and $i \ne 4$.  Furthermore, based on
  ($6$), ($7$) and ($8$), it is straightforward to check that
  $\frak{M}_0^\ast, w_0 \models A^\ast_4$ and
  $\frak{M}_0^\ast, w_0 \models A^\ast_9$.

  Thus, $\frak{M}_0^\ast, w_0 \models A^{\ast}$.~
\end{proof}

From Lemma~\ref{lem:monadic} we immediately obtain the following
result:

\begin{theorem}
  \label{thr:nat-ref-sat}
  Satisfiability for\/ $\mathbf{L}(\nat, \leqslant)$ is\/
  $\Sigma^1_1$-hard in languages with two individual variables, one
  monadic predicate letter and one proposition letter.
\end{theorem}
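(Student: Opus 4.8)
The plan is to obtain Theorem~\ref{thr:nat-ref-sat} as an immediate corollary of Lemma~\ref{lem:monadic}, by packaging the constructions of the three preceding subsections into a single computable many-one reduction. Recall first that a formula $\vp$ is satisfiable in the class of models based on $\langle \nat, \leqslant \rangle$ precisely when $\langle \nat, \leqslant \rangle \not\models \neg \vp$; thus the satisfiability problem for $\mathbf{L}(\nat, \leqslant)$ is the complement of its validity problem. I would take as input an instance of the $\nat \times \nat$ recurrent tiling problem, i.e.\ a finite set $T = \{t_0, \ldots, t_s\}$ of tile types together with its edge-colour data; by \cite[Theorem 6.4]{Harel86} this problem is $\Sigma^1_1$-complete, hence $\Sigma^1_1$-hard.

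Next I would exhibit the reduction explicitly. Given $T$, the formula $A^\ast$ is built in three effective stages: form $A$ from $A_0$ through $A_9$ (Section~\ref{sec:reduction}), apply the binary-predicate-eliminating substitution $\cdot'$ to obtain $A'$ (Section~\ref{sec:elim-binary}), and apply $\cdot^\ast$ together with the modified clauses $A_4^\ast$ and $A_9^\ast$ to obtain $A^\ast$ (Section~\ref{sec:elim-monadic}). Each stage is a purely syntactic transformation whose output has size bounded by a computable function of $|T|$, since the conjunctions and disjunctions involved range only over $T$ and over matching pairs of tile types; hence the map $T \mapsto A^\ast$ is computable. Crucially, as already noted in the text, $A^\ast$ contains only two individual variables, the single monadic predicate letter $P$ and the single proposition letter $q$, so every value of the reduction lies in the required restricted language.

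Finally I would invoke Lemma~\ref{lem:monadic}, which states that a recurrent tiling of $\nat \times \nat$ satisfying ($T_1$) through ($T_3$) exists if, and only if, $\langle \nat, \leqslant \rangle \not\models \neg A^\ast$, i.e.\ if, and only if, $A^\ast$ is satisfiable on $\langle \nat, \leqslant \rangle$. Composing this equivalence with the $\Sigma^1_1$-completeness of the tiling problem yields a computable many-one reduction of a $\Sigma^1_1$-hard set to the restricted satisfiability problem, which is exactly the assertion of the theorem.

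The point to stress is that no new mathematical obstacle arises at this stage: the entire substance resides in the three lemmas, and in particular in their ``only if'' directions, where one must construct explicit models on $\langle \nat, \leqslant \rangle$ and verify that the $\pDiamond$- and $\PDiamond$-operators faithfully simulate the missing ``next'' modality. The only residual task for the theorem itself is the bookkeeping verification that $T \mapsto A^\ast$ is genuinely effective and that it never introduces a third individual variable, a second monadic letter, or a second proposition letter—both facts being evident from the explicit form of the $A_i^\ast$. Hence the theorem follows at once.
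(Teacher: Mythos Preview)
Your proposal is correct and takes essentially the same approach as the paper: the theorem is obtained as an immediate consequence of Lemma~\ref{lem:monadic}, with the observation that the map $T \mapsto A^\ast$ is a computable many-one reduction landing in the required restricted language. The paper's own proof is a single sentence to this effect; your additional bookkeeping about effectiveness and the language restrictions is accurate and simply makes explicit what the paper leaves implicit.
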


\section{Logics of discrete linear orders}
\label{sec:discrete}

We now generalise Theorem~\ref{thr:nat-ref-sat} to logics of descrete
linear orders other than $\langle \nat, \leqslant \rangle$.

We first consider logics of frames based on $\nat$.  Let $R$ be a
binary relation on $\nat$ between $<$ and $\leqslant$.  Define
$$
\begin{array}{lcl}
\Box^{+} \vp & = & \vp \con \Box \vp.
\end{array}
$$
Then, the relation $\leqslant$ is the reflexive closure of $R$ and,
hence, it is the accessibility relation associated with the operator
$\Box^{+}$: for every model
$\frak{M} = \langle \nat, R, D, I \rangle$, every $w \in \nat$ and
every assignment $g$,
$$
\begin{array}{lcl}
  \frak{M},w\models^g \Box^+\varphi & \iff & \mbox{$\frak{M},w'\models^g \varphi$, for every $w'\in \nat$ such that $w\leqslant w'$.}
\end{array}
$$

Let $A^+$ to be the formula obtained from $A^\ast$ by replacing every
occurrence of $\Box$ with an occurrence of $\Box^+$.  The noted
correspondence between $\Box^+$ and $\leqslant$, as well as their
connection with, respectively, $\Box$ and $R$, give us the following
analogue of Lemma~\ref{lem:monadic}:

\begin{lemma}
  \label{lem:monadic-gen}
  There exists a recurrent tiling of\/ $\nat \times \nat$\/ satisfying
  \textup{($T_1$)} through \textup{($T_3$)} if, and only if,
  $\langle \nat, R \rangle \not\models \neg A^+$.
\end{lemma}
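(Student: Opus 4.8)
The plan is to reduce the statement to Lemma~\ref{lem:monadic} by exploiting the correspondence, noted just above, between $\Box^+$ evaluated over $R$ and $\Box$ evaluated over $\leqslant$. The key structural fact is that, since $<\,\subseteq R \subseteq\,\leqslant$, the relation $\leqslant$ is the reflexive closure of $R$; in particular, a domain function $D$ on $\nat$ satisfies the expanding-domain condition with respect to $R$ if, and only if, it satisfies it with respect to $\leqslant$ (if $w\leqslant w'$ then either $w=w'$, so $D(w)\subseteq D(w')$ trivially, or $w<w'$ and hence $wRw'$, so again $D(w)\subseteq D(w')$). Consequently, sending each model $\frak{M}=\langle\nat,R,D,I\rangle$ to the model $\frak{M}^\leqslant=\langle\nat,\leqslant,D,I\rangle$, obtained by replacing $R$ with $\leqslant$ while keeping the domains and interpretation fixed, gives a bijection between the models based on $\langle\nat,R\rangle$ and those based on $\langle\nat,\leqslant\rangle$.

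The core of the proof is the following semantic equivalence, which I would establish by induction on the construction of formulas: writing $\psi^+$ for the result of replacing every occurrence of $\Box$ by $\Box^+$ in a subformula $\psi$ of $A^\ast$, for every $w\in\nat$ and every assignment $g$,
\begin{equation*}
\frak{M}, w \models^g \psi^+ \iff \frak{M}^\leqslant, w \models^g \psi.
\end{equation*}
Here it is essential to treat $\Box$ as the sole primitive modal connective, so that $\Diamond$, and therefore the operators $\PDiamond$ and $\PDiamond^n$ that pervade $A^\ast$, are abbreviations built from $\neg$ and $\Box$; the substitution $\Box\mapsto\Box^+$ then transforms them uniformly, turning $\Diamond=\neg\Box\neg$ into $\neg\Box^+\neg$. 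The atomic, $\bot$-, $\imp$- and $\forall$-cases of the induction are immediate, as $\frak{M}$ and $\frak{M}^\leqslant$ carry the same domains and the same interpretation $I$. The only case invoking the passage from $R$ to $\leqslant$ is $\psi=\Box\chi$: by the displayed correspondence, $\frak{M},w\models^g(\Box\chi)^+$, i.e.\ $\frak{M},w\models^g\Box^+\chi^+$, holds precisely when $\frak{M},w'\models^g\chi^+$ for every $w'$ with $w\leqslant w'$; by the inductive hypothesis this is equivalent to $\frak{M}^\leqslant,w'\models^g\chi$ for every such $w'$, that is, to $\frak{M}^\leqslant,w\models^g\Box\chi$.

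Taking $\psi=A^\ast$, which is a sentence so that the assignment is immaterial, the equivalence shows that $A^+$ is satisfiable in a model based on $\langle\nat,R\rangle$ if, and only if, $A^\ast$ is satisfiable in a model based on $\langle\nat,\leqslant\rangle$; equivalently, $\langle\nat,R\rangle\not\models\neg A^+$ if, and only if, $\langle\nat,\leqslant\rangle\not\models\neg A^\ast$. The desired conclusion is then immediate from Lemma~\ref{lem:monadic}. I expect no serious obstacle, since all the combinatorial work is already carried out in Lemma~\ref{lem:monadic} and the present statement is a routine relativisation; the one point genuinely deserving care is the treatment of the derived diamond operators under the substitution $\Box\mapsto\Box^+$, which is precisely why fixing $\Box$ as the primitive connective and letting the induction dispatch $\Diamond$, $\PDiamond$ and their iterates uniformly is the cleanest route.
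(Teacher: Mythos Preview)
Your proposal is correct and follows exactly the approach the paper takes: the paper's ``proof'' is nothing more than the sentence preceding the lemma, invoking the correspondence between $\Box^+$ over $R$ and $\Box$ over $\leqslant$, and you have simply spelled out that correspondence via the obvious structural induction. Your remark about the expanding-domain condition coinciding for $R$ and $\leqslant$, and your care in treating $\Box$ as the sole primitive so that $\Diamond$ and the derived operators $\PDiamond$, $\PDiamond^n$ are handled uniformly by the substitution, are precisely the details the paper leaves implicit.
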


From Lemma~\ref{lem:monadic-gen}, we obtain the analogue of
Theorem~\ref{thr:nat-ref-sat} for $\mathbf{L}(\nat, R)$.  Moreover,
since none of the arguments made so far depend on the assumption of
properly expanding domains, we obtain the following generalisation of
Theorem~\ref{thr:nat-ref-sat}:

\begin{theorem}
  \label{thr:nat}
  Let $R$ be a binary relation on $\nat$ between $<$ and $\leqslant$,
  and let $L$ be a logic such that
  $\mathbf{L}(\nat, R) \subseteq L \subseteq \mathbf{L}_{c} (\nat,
  R)$. Then, satisfiability for\/ $L$ is\/ $\Sigma^1_1$-hard in
  languages with two individual variables, one monadic predicate
  letter and one proposition letter.
\end{theorem}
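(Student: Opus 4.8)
The plan is to reduce the $\Sigma^1_1$-complete recurrent tiling problem directly to satisfiability for $L$, reusing the formula $A^+$ from Lemma~\ref{lem:monadic-gen} without any further construction. Recall that $A^+$ is obtained from $A^\ast$ by replacing each $\Box$ with $\Box^+$, so it still contains only two individual variables, the single monadic letter $P$ and the single proposition letter $q$; moreover, the map sending a finite tile set $T$ to the corresponding formula $A^+$ (equivalently, to $\neg A^+$) is clearly computable. Thus the whole theorem reduces to the single equivalence that a recurrent tiling of $\nat\times\nat$ satisfying \textup{($T_1$)}--\textup{($T_3$)} exists if, and only if, $\neg A^+\notin L$, i.e.\ $A^+$ is satisfiable with respect to $L$.

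First I would record the inclusion $\mathbf{L}(\nat,R)\subseteq\mathbf{L}_c(\nat,R)$: every constant-domain predicate frame based on $\langle\nat,R\rangle$ is in particular an expanding-domain one, so validity on all expanding-domain frames entails validity on all constant-domain frames. Together with the hypothesis, this sandwiches $L$ and yields the two one-sided implications I shall exploit: $\neg A^+\in\mathbf{L}(\nat,R)$ forces $\neg A^+\in L$, while $\neg A^+\notin\mathbf{L}_c(\nat,R)$ forces $\neg A^+\notin L$.

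The two directions of Lemma~\ref{lem:monadic-gen} are then applied at opposite ends of this interval. For the forward implication, assuming a tiling exists, I would invoke the ``only if'' part of the lemma; the witnessing model there (the analogue of $\frak{M}_0^\ast$, now read with $\Box^+$ over the frame $\langle\nat,R\rangle$) has the globally constant domain $\mathcal{D}$, so $A^+$ is satisfied on a constant-domain frame, giving $\neg A^+\notin\mathbf{L}_c(\nat,R)$ and hence $\neg A^+\notin L$. For the converse, assuming $\neg A^+\notin L$ and using $\mathbf{L}(\nat,R)\subseteq L$, I get $\neg A^+\notin\mathbf{L}(\nat,R)$, i.e.\ $\langle\nat,R\rangle\not\models\neg A^+$, whence $A^+$ is satisfied in some expanding-domain model on the frame; the ``if'' part of the lemma then returns a tiling. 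Since the recurrent tiling problem is $\Sigma^1_1$-complete by \cite[Theorem 6.4]{Harel86}, this equivalence exhibits satisfiability for $L$ as $\Sigma^1_1$-hard in the stated language.

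The point to watch---more a subtlety than a genuine obstacle---is the asymmetric use of the two domain assumptions. The ``only if'' direction must deliver a constant-domain model, so that the satisfying formula lies outside the larger logic $\mathbf{L}_c(\nat,R)$; the ``if'' direction must apply to arbitrary expanding-domain models, so that non-satisfiability would already place the formula inside the smaller logic $\mathbf{L}(\nat,R)$. Both features are inherited from Sections~\ref{sec:ref}--\ref{sec:discrete}: the model constructed from a tiling has a globally constant domain, and, as already noted, the extraction of a tiling nowhere uses that domains are constant. Consequently no new model or formula is required---only the observation that the two halves of Lemma~\ref{lem:monadic-gen} match the two endpoints of $[\mathbf{L}(\nat,R),\mathbf{L}_c(\nat,R)]$.
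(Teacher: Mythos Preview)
Your proposal is correct and follows essentially the same approach as the paper: the paper also derives Theorem~\ref{thr:nat} directly from Lemma~\ref{lem:monadic-gen}, noting that the constructed model $\frak{M}_0^\ast$ has constant domain while the extraction of a tiling nowhere requires constant domains. You have simply made explicit the sandwich argument between $\mathbf{L}(\nat,R)$ and $\mathbf{L}_c(\nat,R)$ that the paper leaves implicit in the phrase ``none of the arguments made so far depend on the assumption of properly expanding domains.''
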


\begin{figure}
  \centering
  \begin{tikzpicture}[scale=1]

\coordinate (w0)     at (1, 1.00);
\coordinate (wn`)    at (1, 2.00);
\coordinate (wn)     at (1, 2.75);
\coordinate (wn')    at (1, 3.50);
\coordinate (wn'')   at (1, 4.25);
\coordinate (wdots)  at (1, 1.60);
\coordinate (wdots') at (1, 4.85);

\coordinate (v0)     at (5, 1.00);
\coordinate (vn`)    at (5, 2.00);
\coordinate (vn)     at (5, 2.75);
\coordinate (vn')    at (5, 3.50);
\coordinate (vn'')   at (5, 4.25);
\coordinate (vdots)  at (5, 1.60);
\coordinate (vdots') at (5, 4.85);

\draw [fill] (w0)     circle [radius=2.0pt] ;
\draw [fill] (wn`)    circle [radius=2.0pt] ;
\draw []     (wn)     circle [radius=2.0pt] ;
\draw []     (wn')    circle [radius=2.0pt] ;
\draw []     (wn'')   circle [radius=2.0pt] ;
\draw []     (wdots)  node   {$\vdots$}     ;
\draw []     (wdots') node   {$\vdots$}     ;

\draw []     (v0)     circle [radius=2.0pt] ;
\draw []     (vn`)    circle [radius=2.0pt] ;
\draw [fill] (vn)     circle [radius=2.0pt] ;
\draw [fill] (vn')    circle [radius=2.0pt] ;
\draw [fill] (vn'')   circle [radius=2.0pt] ;
\draw []     (vdots)  node   {$\vdots$}     ;
\draw []     (vdots') node   {$\vdots$}     ;

\begin{scope}[>=latex]
\draw [->, shorten >= 2.0pt, shorten <= 2.0pt] (wn`)  -- (wn)  ;
\draw [->, shorten >= 2.0pt, shorten <= 2.0pt] (wn)   -- (wn') ;
\draw [->, shorten >= 2.0pt, shorten <= 2.0pt] (wn')  -- (wn'');
\draw [->, shorten >= 2.0pt, shorten <= 2.0pt] (vn`)  -- (vn)  ;
\draw [->, shorten >= 2.0pt, shorten <= 2.0pt] (vn)   -- (vn') ;
\draw [->, shorten >= 2.0pt, shorten <= 2.0pt] (vn')  -- (vn'');
\end{scope}

\node [right=2pt] at (w0)   {$0$}  ;
\node [right=2pt] at (wn`)  {$n-1$};
\node [right=2pt] at (wn)   {$n$}  ;
\node [right=2pt] at (wn')  {$n+1$};
\node [right=2pt] at (wn'') {$n+2$};

\node [right=2pt] at (v0)   {$0$}  ;
\node [right=2pt] at (vn`)  {$n-1$};
\node [right=2pt] at (vn)   {$n$}  ;
\node [right=2pt] at (vn')  {$n+1$};
\node [right=2pt] at (vn'') {$n+2$};

\node [below = 10pt ] at (w0) {$\frak{G}_n$};
\node [below = 10pt ] at (v0) {$\frak{H}_n$};

\end{tikzpicture}

\caption{Frames $\frak{G}_n$ and $\frak{H}_n$}
  \label{fig3}
\end{figure}

As we next observe, Theorem~\ref{thr:nat} covers countably many
logics, countably many pairs of which are incompatible.  First, note
that $\mathbf{L}(\nat, <)$ and $\mathbf{L}(\nat, \leqslant)$ are
incompatible. Let
$$
\begin{array}{lcl}
  Z & = & \Box (\Box p \imp p) \imp (\Diamond \Box p \imp \Box p);\\
  \mbox{\it ref}  & = & \Box p \imp p.
\end{array}
$$
It is well known~\cite{Goldblatt92} that
$\langle \nat, < \rangle \models Z$, but
$\langle \nat, \leqslant \rangle \not\models Z$; hence,
$\mathbf{L}(\nat, <) \not\subseteq \mathbf{L}(\nat, \leqslant)$.  It
is also clear that
$\langle \nat, \leqslant \rangle \models \mbox{\it ref}$, but
$\langle \nat, < \rangle \not\models \mbox{\it ref}$; hence,
${\bf L}(\nat, \leqslant) \not\subseteq {\bf L}(\nat, <)$.

Generalising this observation, we obtain countably many logics,
countably many pairs of which are incompatible.  Let
$$
\begin{array}{lcl}
\XBox \vp
  & =
  & (q \con \Box (\neg q \imp \vp)) \dis (\neg q \con \Box (q \imp \vp)).
\end{array}
$$

Let $\frak{G}_n$ be the irreflexive chain $0, \ldots, n - 1$, followed
by the infinite reflexive chain $n, n+1, \ldots\,$, shown in
Figure~\ref{fig3} on the left.  Dually, let $\frak{H}_n$ be the
reflexive chain $0, \ldots, n - 1$, followed by the infinite
irreflexive chain $n, n+1, \ldots\,$, shown in Figure~\ref{fig3} on
the right.

We show that ${\bf L}(\frak{G}_k) \ne {\bf L}(\frak{G}_m)$ and
${\bf L}(\frak{H}_k) \ne {\bf L}(\frak{H}_m)$ provided $k \ne m$.
Indeed, $k > m$ implies
$\mathbf{L}(\frak{G}_k) \subseteq \mathbf{L}(\frak{G}_m)$: if
$\frak{G}_m, s \not\models \vp$ then $\frak{G}_k, s \not\models \vp$
since $\frak{G}_m$ is a generated subframe of $\frak{G}_k$.  Also,
$\frak{G}_n \models \Box^n \mbox{\it ref}$, but
$\frak{G}_{n+1}, 0 \not\models \Box^n \mbox{\it ref}$.  Hence,
$\mathbf{L}(\frak{G}_k) \ne \mathbf{L}(\frak{G}_m)$ if $k \ne m$.  A similar
argument, using the formula $\XBox^n Z$ to distinguish
$\mathbf{L}(\frak{H}_{n+1})$ from $\mathbf{L} (\frak{H}_{n})$, shows that
$\mathbf{L}(\frak{H}_k) \ne \mathbf{L}(\frak{H}_m)$ if $k \ne m$.

Thus, we have infinitely many logics ${\bf L}(\frak{G}_n)$ and
infinitely many logics ${\bf L}(\frak{H}_n)$. Note that, for every
$k, m \in \nat$, logics ${\bf L}(\frak{G}_k)$ and
${\bf L}(\frak{H}_m)$ are incompatible since, for every
$k, m \in \nat$, both
$\Box^{k+m} \mbox{\it ref} \in {\bf L}(\frak{G}_k) - {\bf
  L}(\frak{H}_m)$ and
$\XBox^{k+m} Z \in {\bf L}(\frak{H}_m) - {\bf L}(\frak{G}_k)$.

From Theorem~\ref{thr:nat}, we obtain the following:

\begin{corollary}
  \label{cor:nat}
  Satisfiability for\/ ${\mathbf{L}_{c}}(\nat, \leqslant)$,
  $\mathbf{L}(\nat, <)$ and\/ ${\mathbf{L}_{c}}(\nat, <)$ is\/
  $\Sigma^1_1$-hard in languages with two individual variables, one
  monadic predicate letter and one proposition letter.
\end{corollary}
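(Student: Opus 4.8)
The plan is to obtain all three hardness claims directly from Theorem~\ref{thr:nat} by specialising the relation $R$ and the logic $L$ appropriately. That theorem applies to any binary relation $R$ on $\nat$ with $< \subseteq R \subseteq \leqslant$ and to any logic $L$ sandwiched as $\mathbf{L}(\nat, R) \subseteq L \subseteq \mathbf{L}_{c}(\nat, R)$. Since both $<$ and $\leqslant$ are themselves relations between $<$ and $\leqslant$, I may instantiate $R$ as either of them, so no separate machinery is needed beyond the single generic theorem.

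The one general fact the argument relies on is the inclusion $\mathbf{L}(\nat, R) \subseteq \mathbf{L}_{c}(\nat, R)$ for every such $R$. First I would record this, noting that it is immediate from the definitions in Section~\ref{sec:prelim}: every predicate frame with a constant domain based on $\langle \nat, R \rangle$ is, in particular, a predicate frame with expanding domains based on $\langle \nat, R \rangle$, so the class of constant-domain models based on $\langle \nat, R \rangle$ is a subclass of the class of all models based on it. A formula valid on the larger class is therefore valid on the smaller one, which is exactly the stated inclusion.

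With this in hand, each of the three logics fits the hypotheses of Theorem~\ref{thr:nat}. For $\mathbf{L}_{c}(\nat, \leqslant)$, take $R$ to be $\leqslant$ and $L = \mathbf{L}_{c}(\nat, \leqslant)$; the chain $\mathbf{L}(\nat, \leqslant) \subseteq \mathbf{L}_{c}(\nat, \leqslant) \subseteq \mathbf{L}_{c}(\nat, \leqslant)$ holds by the inclusion just noted. For $\mathbf{L}(\nat, <)$ and $\mathbf{L}_{c}(\nat, <)$, take $R$ to be $<$ and, respectively, $L = \mathbf{L}(\nat, <)$ and $L = \mathbf{L}_{c}(\nat, <)$; in both cases $\mathbf{L}(\nat, <) \subseteq L \subseteq \mathbf{L}_{c}(\nat, <)$, again using the inclusion. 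Theorem~\ref{thr:nat} then yields $\Sigma^1_1$-hardness of satisfiability for each of the three logics in languages with two individual variables, one monadic predicate letter and one proposition letter.

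I do not expect a substantial obstacle here: the corollary is a direct specialisation of Theorem~\ref{thr:nat}, and the only thing requiring verification is the inclusion $\mathbf{L}(\nat, R) \subseteq \mathbf{L}_{c}(\nat, R)$, which is a routine consequence of constant-domain frames forming a subclass of expanding-domain frames. The one point worth stating explicitly is that the endpoints $<$ and $\leqslant$ of the admissible range for $R$ are themselves admissible values, so that the two orders need no treatment beyond the generic theorem.
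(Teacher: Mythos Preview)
Your proposal is correct and matches the paper's approach exactly: the paper states the corollary as an immediate consequence of Theorem~\ref{thr:nat} without further argument, and your specialisations of $R$ and $L$ are precisely what that theorem was designed to cover. The only point you add beyond the paper is the explicit justification of $\mathbf{L}(\nat, R) \subseteq \mathbf{L}_{c}(\nat, R)$, which is routine and correct.
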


The frame $\langle \nat,<\rangle$ is isomorphic to the structure
$\langle\omega,< \rangle$, where $\omega$ is the least infinite
ordinal and $<$, as for all ordinals, is the membership relation on
$\omega$.  We next generalise Theorem~\ref{thr:nat} to logics of
frames based on infinite ordinals of a special form, which include
$\omega$.

\begin{theorem}
  \label{thr:ordinals}
  Let $\alpha = \omega \cdot m + k$, for some $m$ with
  $1 \leqslant m < \omega$ and some $k < \omega$, let $R$ be a binary
  relation on $\alpha$ between $<$ and its reflexive closure
  $\leqslant$, and let $L = \mathbf{L} (\alpha,R)$.  Then, satisfiability
  for\/ $L$ is\/ $\Sigma^1_1$-hard in languages with two individual
  variables, one monadic predicate letter and one proposition letter.
\end{theorem}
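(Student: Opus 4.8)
The plan is to adapt the three-step reduction of Section~\ref{sec:ref} (Lemmas~\ref{lem:tiling-reduction}–\ref{lem:monadic}), read through the $\Box^+$-for-$\leqslant$ correspondence already exploited for Lemma~\ref{lem:monadic-gen}, so that it applies to the linear order $\langle \alpha, \leqslant \rangle$, where $\leqslant$ is the reflexive closure of $R$. Since, on any model over $\langle \alpha, R \rangle$, the operator $\Box^+$ is governed by $\leqslant$, working with $A^+$ over $\langle \alpha, R \rangle$ is the same as working with $A^\ast$ over $\langle \alpha, \leqslant \rangle$; so I would phrase everything in terms of $\langle \alpha, \leqslant \rangle$. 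The order $\langle \alpha, \leqslant \rangle$ differs from $\langle \nat, \leqslant \rangle$ in two ways: when $k \geqslant 1$ it has a finite top cluster, the $k$ tail points $\omega \cdot m, \ldots, \omega \cdot m + k - 1$; and when $m \geqslant 2$ it has interior limit ordinals $\omega, \omega \cdot 2, \ldots\,$.

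The first feature forces a change to the encoding. At the maximal world $\tau = \omega \cdot m + k - 1$ the operator $\PDiamond$ is unsatisfiable, since no barrier-then-non-barrier pattern lies above $\tau$; hence the recurrence conjunct $A_9$, which demands $\Box \PDiamond P_{t_0}$ throughout the future, cannot hold at a root whose mark is forced by $A_1$. I would therefore replace $A_9$ (and its $\ast$- and $+$-translations) by its relativization to the \emph{live} worlds, namely
$$
\forall x\, (M(x) \imp \Box^{+} (\PDiamond \top \imp \PDiamond P_{t_0}(x))),
$$
calling the resulting reduction formula $A^\dagger$. A routine inspection shows this is the \emph{only} conjunct that breaks at the tail: in every other $A_i$ the subformulas built from $\PDiamond$, from $M$, and from the $\triangleleft$-simulation occur only in antecedents, so each $A_i$ holds vacuously at any world where $\PDiamond$ is unsatisfiable and no mark is present.

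For the ``only if'' direction I would satisfy $A^\dagger$ at the least world $\omega \cdot (m-1)$ of the final copy of $\omega$. Its generated submodel has order type $\omega + k$ and contains no interior \emph{live} limit, so on its $\omega$-part I can reuse verbatim the model $\frak{M}_0^\ast$ from the proof of Lemma~\ref{lem:monadic}, while turning the $k$ tail points into \emph{dead barriers}: each satisfies $\forall x\, P(x)$, fails $q$, and admits no $\PDiamond$-successor. One then checks that every conjunct of $A^\dagger$ holds at $\omega \cdot (m-1)$—the tail contributes nothing, as marks and $\PDiamond$-transitions never reach it and the relativized $A_9$ makes no demand there—and, since truth at a world depends only on its generated subframe, concludes $\langle \alpha, R \rangle \not\models \neg A^\dagger$.

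For the ``if'' direction, suppose $A^\dagger$ holds at $w_0$; passing to the generated submodel, the underlying order has type $\omega \cdot m'' + k$ for some $1 \leqslant m'' \leqslant m$ (a finite type is impossible, since the chain $M(a_0), M(a_1), \ldots$ forced by $A_4$ realizes infinitely many distinct marks at distinct worlds). I would then re-run the extraction of Lemma~\ref{lem:tiling-reduction}, observing that properties (1)–(5) and ($T_1$)–($T_3$) use only the linearity of $\leqslant$ and the discreteness of the barrier-induced relation $R_{\scriptsize\PDiamond}$. The tail is mark-free and not live, so it yields no rows and is ignored by the relativized $A_9$. The genuinely new point—and the step I expect to be the main obstacle—is to show that marks cannot accumulate at an interior live limit ordinal $\lambda$: were infinitely many successive marks to appear below $\lambda$, then $\lambda$, being live and refuting $p$, would itself carry a finite mark $a_N$ by the persistence property (2), contradicting the strict monotonicity of marks along $\leqslant$ enforced by the uniqueness property (3) together with the $\neg \PDiamond^2 M(y)$ conjunct of $A_4$. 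Hence each block carries only finitely many successive marks, every limit is crossed by a single $\PDiamond$-step, and the marked worlds again form an order-type-$\omega$ sequence exactly as over $\nat$; the tiling function $f$ and conditions ($T_1$)–($T_3$) are then read off as before.
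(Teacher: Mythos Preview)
Your overall strategy---relativize the recurrence conjunct so that the finite tail does not break it, pass to $\leqslant$ via $\Box^+$, and re-run the three reductions---is the paper's strategy. The paper relativizes $A_9$ with the guard $\exists y\, M(y)$ rather than your $\pDiamond\top$ (giving
$A_9^\bullet = \forall x\,(M(x) \imp \Box(\exists y\, M(y) \imp \pDiamond(\exists y\, M(y) \imp P_{t_0}(x))))$),
and in the ``only if'' direction places the satisfying model on the \emph{first} copy of $\omega$, making $p$, $M$, $\triangleleft$ and every $P_t$ empty on the remaining copies and the tail; your choice of the last copy would also work. These are inessential differences.

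Where your proposal diverges materially is the ``if'' extraction, and there is a real gap. Your non-accumulation argument does not go through: property~(2) applies only when \emph{no} $p$-world lies between $u$ and $u'$, but if marks $a_0, a_1, \ldots$ sit at $w_0 < w_1 < \cdots < \lambda$ then each step $w_n \to w_{n+1}$ crosses a $p$-world, so (2) gives you nothing at $\lambda$; nor have you any reason to assume $\lambda \not\models p$. Indeed accumulation \emph{can} occur: put $\frak{M}_0$ on the first copy of $\omega \cdot 2$ and make everything trivial on the second copy, and all marks sit below $\omega$. More seriously, even granting your conclusion that the $w_m$ form an $\omega$-sequence, your $(T_3)$ argument is incomplete: from $w_m$ the witness for $\pDiamond P_{t_0}(a_0)$ required by your $A_9^\dagger$ may be a $\neg p$-world in a later copy that carries no mark from the sequence $a_0, a_1, \ldots$, so (5) does not transfer $P_{t_0}(a_0)$ back to any $w_{m'}$.

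The paper sidesteps all of this. It never argues against accumulation; instead it passes to the \emph{last} copy of $\omega$ containing a marked world, fixes some $w$ there with $w \models M(a_k)$, and extracts the tiling from the shifted sequence $w_k = w,\, w_{k+1},\, w_{k+2},\ldots$, all of which must lie in that copy (later copies have no marks by choice, and the finite tail cannot host the infinite chain of successor marks forced by $A_4$). The tiling function is $f(n,m) = t \iff w_{k+m} \models P_t(a_n)$, and the inner guard $\exists y\, M(y)$ in $A_9^\bullet$ keeps the recurrence witnesses among worlds that are marked in the relevant sense.
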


\begin{proof}
  The proof is similar to that of Theorem~\ref{thr:nat}.  We only
  comment on how to obtain an analogue of
  Lemma~\ref{lem:tiling-reduction}, an encoding of the recurrent
  tiling problem in $\mathbf{L}(\alpha, \leqslant)$.  (For the general
  case of an arbitrary relation between $<$ and $\leqslant$, we use
  $\Box^+$ instead of $\Box$.)

  Since the frame $\langle \alpha, \leqslant \rangle$ may contain a
  world that does not see another world, we need to define a variant
  of the formula $A_9$ suitable for such a situation:
  $$
  \begin{array}{lcl}
    A^{\bullet}_9 & = & \forall x\, \big( (M(x) \imp \Box (\exists y\,
                        M(y) \imp \pDiamond (\exists y\, M(y) \imp
                        P_{t_0} (x))) \big).
  \end{array}
  $$

  Let $A^{\bullet}$ be the conjunction of formulas $A_0$ through $A_8$
  from Section~\ref{sec:ref}, as well as $A^{\bullet}_9$.  We claim
  that there exists a recurrent tiling satisfying \textup{($T_1$)}
  through \textup{($T_3$)} if, and only if,
  $\langle \alpha, \leqslant \rangle \not\models \neg A^{\bullet}$.

  Assume $\frak{M}, u_0 \models A^{\bullet}$, for some model
  $\frak{M} = \langle \alpha, \leqslant, D, I \rangle$ and some world
  $u_0 \in \alpha$.  Then, there exists in $\alpha$ a last copy of
  $\omega$ that has the following property: it contains a world $w$
  marked by an element, say $a_k$, of the sequence
  $a_0 \triangleleft^{I,u_0} a_1 \triangleleft^{I,u_0} a_2
  \triangleleft^{I,u_0} \ldots\,$ of elements of $D(u_0)$ whose
  existence follows from $\frak{M}, u_0 \models A_2$.  Then, a tiling
  can be obtained from the said copy of $\omega$, similarly to the way
  it was done in the proof of the ``if'' part of
  Lemma~\ref{lem:tiling-reduction}: columns are simulated by elements
  $a_0,a_1,a_2,\ldots$ of $D(u_0)$; rows are simulated by worlds
  $w_k,w_{k+1},w_{k+2},\ldots$ such that
  \begin{itemize}
  \item $w_k = w$,
  \item $w_k < w_{k+1} < w_{k+2} < \ldots\,$,
  \item $w_{k+n}\models M(a_{k+n})$, for every $n \in \nat$;
  \end{itemize}
  and a tiling function $f\colon \nat\times\nat \to T$ is defined by
  $$
  \begin{array}{lcl}
    f(n,m) = t & \mbox{ whenever } & w_{k+m}\models P_t(a_n).
  \end{array}
  $$
  Thus defined $f$ clearly satisfies \textup{($T_1$)} and
  \textup{($T_2$)}.  Also, since
  $\frak{M}, u_0 \models A^{\bullet}_9$, the set
  $$
  \{ w_{k+m} : m \in \nat \mbox{ and } w_{k+m} \models P_{t_0} (a_0)\}
  $$
  is infinite; hence, $f$ satisfies \textup{($T_3$)} and so is a
  required function.

  For the converse, we use the first copy of $\omega$ contained in
  $\alpha$ for the satisfaction of $A^{\bullet}$: first, we define the
  interpretation of all the letters occurring in $A^{\bullet}$ on the
  said copy of $\omega$ as in the proof of the ``only if'' part of
  Lemma~\ref{lem:tiling-reduction}; second, we define the
  interpretation of letters $p$, $M$, $P_t$, for each $t \in T$, and
  $\triangleleft$ to be empty at every world not belonging to the said
  copy of $\omega$; last, we define $U$ to be an arbitrary non-empty
  subset of the domain of every world not belonging to the said copy
  of $\omega$.  Then, $A^{\bullet}$ is true at the least, with respect
  to $\leqslant$, world of $\alpha$; hence, it is satisfiable.
\end{proof}

\section{Logics of dense and continuous linear orders}
\label{sec:dense}

It is not clear whether $\Sigma^1_1$-hardness results analogous to
Theorem~\ref{thr:ordinals} can be obtained for logics of linear orders
distinct from those mentioned there.  Perhaps the most significant
logics of linear orders not covered by Theorem~\ref{thr:ordinals} are
logics of the rationals and the reals with natural partial and strict
orders, i.e.  $\mathbf{L} (\mathds{Q}, \leqslant)$,
$\mathbf{L} (\mathds{Q}, <)$, $\mathbf{L} (\mathds{R}, \leqslant)$ and
$\mathbf{L} (\mathds{R}, <)$.

The proof of Lemma~\ref{lem:tiling-reduction} does not carry over to
either $\mathbf{L} (\mathds{Q}, \leqslant)$ or
$\mathbf{L} (\mathds{R}, \leqslant)$ since we cannot ensure, given a
model of the formula $A$ based on either
$\langle \mathds{Q}, \leqslant \rangle$ or
$\langle \mathds{R}, \leqslant \rangle$, that the tiling defined as in
the proof of Lemma~\ref{lem:tiling-reduction} satisfies $(T_3)$.  In
the case of $\mathbf{L} (\mathds{Q}, \leqslant)$, no such tiling
exists: $\mathbf{L} (\mathds{Q}, \leqslant)$ is recursively
enumerable~\cite{Corsi93} and hence $\Sigma^0_1$-complete.  The case
of $\mathbf{L} (\mathds{R}, \leqslant)$ might turn out to be similar
as it is not known whether $\mathbf{L} (\mathds{R}, \leqslant)$ is
distinct from $\mathbf{L} (\mathds{Q}, \leqslant)$. (The
superintuitionistic logics of $\langle \mathds{Q}, \leqslant \rangle$
and $\langle \mathds{R}, \leqslant \rangle$
coincide~\cite[p.~701]{Skvortsov2009} and are
$\Sigma^0_1$-complete~\cite[Theorem 1]{Takano87}; on the other hand,
the superintuitionistic, and hence modal, logics of predicate frames
with constant domains over $\langle \mathds{Q}, \leqslant \rangle$ and
$\langle \mathds{R}, \leqslant \rangle$ differ~\cite[Theorem
2]{Takano87}.)

A slight modification of the proof of Lemma~\ref{lem:tiling-reduction}
shows, however, that satisfiability for
$\mathbf{L} (\mathds{Q}, \leqslant)$ and
$\mathbf{L} (\mathds{R}, \leqslant)$ is $\Pi^0_1$-hard---hence,
$\mathbf{L} (\mathds{Q}, \leqslant)$ and
$\mathbf{L} (\mathds{R}, \leqslant)$ are $\Sigma^0_1$-hard---in
languages with two variables, one monadic predicate letter and one
proposition letter: simply leaving out the argument for ($T_3$), we
obtain a reduction to satisfiability for
$\mathbf{L} (\mathds{Q}, \leqslant)$ and
$\mathbf{L} (\mathds{R}, \leqslant)$ in appropriate languages of the
$\Pi^0_1$-complete~\cite{Berger66}, \cite[Appendix A.4]{BGG97}
$\nat \times \nat$ tiling problem whose solution is required to
satisfy ($T_1$) and ($T_2$), but not ($T_3$).  We do, rather,
establish a more general result.

Define $B$ to be the conjunction of formulas $A_0$ through $A_8$
(i.e., leave out $A_9$ from the formula $A$ defined in
Section~\ref{sec:ref}).

\begin{lemma}
  \label{lem:tiling-dense}
  Let $\langle W, \leqslant \rangle$ be a partial linear order
  containing an infinite ascending chain of pairwise distinct elements
  of $W$.  Then, there exists a tiling of\/ $\nat \times \nat$
  satisfying \textup{$(T_1)$} and \textup{$(T_2)$} if, and only if,
  $\langle W, \leqslant \rangle \not\models \neg B$.
\end{lemma}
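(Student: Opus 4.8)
The plan is to mirror the proof of Lemma~\ref{lem:tiling-reduction}, dropping everything that pertains to $A_9$ and the recurrence condition $(T_3)$, and replacing the specific frame $\langle \nat, \leqslant \rangle$ by the given partial linear order.

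For the ``if'' direction I would assume $\frak{M}, u_0 \models B$ for some model $\frak{M}$ based on a predicate frame over $\langle W, \leqslant \rangle$; passing to the generated submodel, I may take every world to lie above $u_0$, so that $\leqslant$ restricts to a partial linear order rooted at $u_0$. From $A_1$ I obtain $a_0 \in D_{u_0}$ with $u_0 \models \neg U(a_0) \con M(a_0)$, and from $A_2$ an infinite sequence $a_0 \triangleleft a_1 \triangleleft \ldots$ in $D_{u_0}$; by $A_3$ this sequence persists as $\triangleleft$ at every world validating $\exists x\, M(x)$. Rather than selecting least marked worlds, as the $\nat$-argument does (a general linear order need not be well-ordered), I would build the rows directly as an ascending chain: setting $w_0 = u_0$ and using $A_4$ exactly as in the derivation of property $(1)$, I obtain $w_{m+1} \in R_{\scriptsize\pDiamond}(w_m)$ with $w_{m+1} \models M(a_{m+1})$, whence $w_m < w_{m+1}$. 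I would then define $f(n,m) = t$ iff $w_m \models P_t(a_n)$, proving this total and single-valued by induction on $m$ (base case from $A_1$ and $A_7$, inductive step by pushing the tiling up along $w_m R_{\scriptsize\pDiamond} w_{m+1}$ via $A_8$) together with $A_6$ for uniqueness. Conditions $(T_1)$ and $(T_2)$ then fall out of $A_7$ and $A_8$ respectively, precisely as in Lemma~\ref{lem:tiling-reduction}; no appeal to $(T_3)$ or to minimal worlds is needed. The point worth verifying is that this chain-based bookkeeping circumvents the downward argument used for property $(4)$ in the $\nat$ case, which implicitly relied on the finiteness of initial segments of $\nat$.

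For the ``only if'' direction I would fix an infinite ascending chain $c_0 < c_1 < c_2 < \ldots$ of $W$, guaranteed by hypothesis, and use its elements to transplant the model $\frak{M}_0$ of Lemma~\ref{lem:tiling-reduction}. Since a failure of frame-validity need only be witnessed on one predicate frame, I may take a constant domain $\mathcal{D} = \nat \cup \{-1\}$; I would then declare $p$ false exactly at the even-indexed chain worlds $c_{2m}$ and, at each such $c_{2m}$, set $M(a)$ true iff $a = m$, set $a \triangleleft b$ true iff $b = a+1$, and set $P_t(a)$ true iff $f(a,m) = t$, while at every remaining world of $W$ I would make $p$ true and leave $M$, $\triangleleft$ and all the $P_t$ empty (so that $U$ holds of every element and $-1$ witnesses $A_0$). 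I would then verify $\frak{M}, c_0 \models B$ by checking $A_0$ through $A_8$ one at a time, exactly as for $\frak{M}_0$.

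The crux of this direction---and the step I expect to be the main obstacle---is controlling the ``extra'' worlds of $W$ (non-chain worlds above $c_0$, and worlds interleaved among the $c_i$), which have no counterpart in the clean $\nat$ model. The key observation making the verification go through unchanged is that all such worlds satisfy $p$ and carry empty $M, \triangleleft, P_t$: consequently they are invisible to $\pDiamond$, since $R_{\scriptsize\pDiamond}$ lands only on $\neg p$ worlds, all of which lie on the chain, and one computes that $R_{\scriptsize\pDiamond}(c_{2m}) = \{ c_{2k} : k > m \}$ and $R_{\scriptsize\pDiamond}^2(c_{2m}) = \{ c_{2k} : k > m+1 \}$, exactly as in $\frak{M}_0$ with $c_{2m}$ playing the role of $2m$. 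Once this reachability computation is in hand, each $A_i$ reduces to the corresponding check already performed for $\frak{M}_0$, the surplus $p$-worlds only ever falsifying antecedents or supplying harmless intermediate $p$-witnesses.
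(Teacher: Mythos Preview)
Your proposal is correct, and the ``if'' direction matches the paper's approach: the paper simply says this half is identical to the ``if'' part of Lemma~\ref{lem:tiling-reduction} minus the $(T_3)$ argument, and your explicit chain construction $w_0 = u_0$, $w_{m+1} \in R_{\scriptsize\pDiamond}(w_m)$ is exactly the right way to spell that out without relying on well-foundedness (which the $\min$-based choice in Lemma~\ref{lem:tiling-reduction} implicitly uses).

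The ``only if'' direction, however, differs from the paper's construction. You make every world off the even-indexed chain satisfy $p$ with $M$, $\triangleleft$ and the $P_t$ all empty, so that such worlds are invisible to $\pDiamond$ and falsify every relevant antecedent. The paper instead extends the interpretation to an off-chain world $v$ by copying it from the least chain world above $v$, setting $I(v,S) = I(w_m,S)$ where $m = \min\{k \in \nat : v \leqslant w_k\}$. Both approaches work: the paper's keeps the off-chain worlds ``in step'' with the chain, while yours simply neutralises them. Your version is arguably cleaner, and it sidesteps a minor lacuna in the paper's definition (the $\min$ is undefined for worlds lying above the entire chain, a case the paper does not explicitly address).
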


\begin{proof}
  (``if'') The proof is identical to that the ``if'' part of
  Lemma~\ref{lem:tiling-reduction}, except that we leave out the
  argument for ($T_3$).

  (``only if'') Suppose $f$ is a function satisfying ($T_1$) and
  ($T_2$).  We obtain a model based on $\langle W, \leqslant \rangle$
  satisfying $B$.

  Let $D_w = \nat \cup \{-1\}$, for every $w \in W$.  To define the
  interpretation function $I$ on $\langle W, \leqslant, D \rangle$, we
  use elements of the infinite ascending chain
  $w_0 \leqslant w_1 \leqslant w_2 \leqslant \ldots $ of worlds from
  $W$ that exists by assumption: we define $I$ so that, for every
  $k \in \nat$ and every $a, b \in \mathcal{D}$,
  $$
  \begin{array}{lcl}
  \frak{M},w_k \models a\triangleleft b
  & \leftrightharpoons &
  \mbox{$k$ is even and $b=a+1$;}
  \smallskip\\
  \frak{M},w_k \models p
  & \leftrightharpoons &
  \mbox{$k$ is odd;}
  \smallskip\\
  \frak{M},w_k \models M(a)
  & \leftrightharpoons &
  \mbox{$k=2a$;}
  \smallskip\\
  \frak{M},w_k \models P_t(a)
  & \leftrightharpoons &
  \mbox{$k=2m$ and $f(a,m) = t$, for some $m\in\nat$,}
  \end{array}
  $$
  and, for every $v \not\in \{w_i : i \in \nat \}$ and every predicate
  letter $S$ of $B$,
  $$
  I(v, S) = I(w_m, S), \mbox{ where } m = \min \{ k\in\nat : v
  \leqslant w_k \}.
  $$

  It is straightforward to check that $\frak{M}, w_0 \models B$, so we
  leave this to the reader.
\end{proof}

Using a modification of the formula $B$ obtained by replacing every
occurrence of $\Box$ by that of $\Box^+$, we can prove the following
analogue of Theorem~\ref{thr:nat} (the proof uses
Lemma~\ref{lem:tiling-dense} in the same way Theorem~\ref{thr:nat}
used Lemma~\ref{lem:tiling-reduction}):

\begin{theorem}
  \label{thr:ascending-chain}
  Let $\langle W, < \rangle$ be a strict linear order containing an
  infinite ascending chain of pairwise distinct elements of $W$.  Let
  $\leqslant$ be the reflexive closure of $<$ and $R$ a binary
  relation between $<$ and $\leqslant$.  Let $L$ be a logic such that
  $\mathbf{L}(W, R) \subseteq L \subseteq \mathbf{L}_{c} (W, R)$.
  Then, satisfiability for $L$ is $\Pi^0_1$-hard---hence, $L$ is\/
  $\Sigma^0_1$-hard---in languages with two individual variables, one
  monadic predicate letter and one proposition letter.
\end{theorem}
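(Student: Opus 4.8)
The plan is to retrace over $\langle W, \leqslant \rangle$ the three reductions that carried Lemma~\ref{lem:tiling-reduction} to Theorem~\ref{thr:nat}, but with $B$ and Lemma~\ref{lem:tiling-dense} playing the roles of $A$ and Lemma~\ref{lem:tiling-reduction}. Concretely, I would first apply the binary-elimination substitution $\cdot'$ of Section~\ref{sec:elim-binary} to $B$, obtaining a formula $B'$ free of the letter $\triangleleft$; then apply the monadic-elimination substitution $\cdot^\ast$ of Section~\ref{sec:elim-monadic} to $B'$, obtaining a formula $B^\ast$ in the language of one monadic letter $P$ and one proposition letter $q$; and finally replace every $\Box$ in $B^\ast$ by $\Box^+$ to obtain $B^+$, so that the reduction speaks of an arbitrary relation $R$ between $<$ and $\leqslant$, just as in the passage from $A^\ast$ to $A^+$ before Lemma~\ref{lem:monadic-gen}. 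The statement to establish is that a tiling of $\nat \times \nat$ satisfying $(T_1)$ and $(T_2)$ exists if, and only if, $\langle W, R \rangle \not\models \neg B^+$.

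The two elimination steps are analogues of Lemmas~\ref{lem:Kripke} and~\ref{lem:monadic} with $B$ in place of $A$ and $\langle W, \leqslant \rangle$ in place of $\langle \nat, \leqslant \rangle$. Their ``if'' directions need no new ideas: via the substitutions $\cdot'$ and $\cdot^\ast$ they reduce to the ``if'' direction of Lemma~\ref{lem:tiling-dense} (and, after the $\Box \mapsto \Box^+$ step, to its version for $\langle W, R \rangle$), exactly as the ``if'' directions of Lemmas~\ref{lem:Kripke} and~\ref{lem:monadic} reduced to that of Lemma~\ref{lem:tiling-reduction}; the operators $\pDiamond$ and $\PDiamond$ count transitions through separator worlds along $\leqslant$, a behaviour that $\Box^+$ transfers from $\leqslant$ to $R$. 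For the ``only if'' directions I would build the satisfying models on the infinite ascending chain $w_0 \leqslant w_1 \leqslant w_2 \leqslant \ldots$ of pairwise distinct elements supplied by the hypothesis, which takes over the role played by $\nat$ in $\frak{M}_0$, $\frak{M}'_0$ and $\frak{M}_0^\ast$: I would lay the ``stretched'' model of Lemma~\ref{lem:monadic} along this chain---designating chain elements to serve as the worlds $w_m$, $\bar w_m$, $v^n_m$ and $\bar v^n_m$, which is possible since the chain is order-isomorphic to $\omega$---and then extend the valuation to every off-chain world by copying it from the least chain element above, precisely as in the ``only if'' part of Lemma~\ref{lem:tiling-dense}.

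The concluding bookkeeping is the sandwich argument of Theorem~\ref{thr:nat}. If a tiling exists, the model just built has a constant domain, so $\neg B^+$ is refuted on a constant-domain predicate frame over $\langle W, R \rangle$; hence $\neg B^+ \notin \mathbf{L}_c (W,R)$ and therefore $\neg B^+ \notin L$, since $L \subseteq \mathbf{L}_c(W,R)$. Conversely, $\neg B^+ \notin L$ forces $\neg B^+ \notin \mathbf{L}(W,R)$ because $\mathbf{L}(W,R) \subseteq L$, so $B^+$ holds in some model on $\langle W, R \rangle$ and the ``if'' direction delivers a tiling. As none of the constructions use properly expanding domains, this equivalence holds uniformly for every $L$ with $\mathbf{L}(W,R) \subseteq L \subseteq \mathbf{L}_c(W,R)$. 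Since the $\nat \times \nat$ tiling problem constrained only by $(T_1)$ and $(T_2)$ is $\Pi^0_1$-complete, the map sending a tile set to $\neg B^+$ is a reduction showing satisfiability for $L$ to be $\Pi^0_1$-hard; passing to complements yields the $\Sigma^0_1$-hardness of $L$ itself.

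I expect the main obstacle to be the ``only if'' construction over a general chain rather than over $\nat$. In $\frak{M}_0^\ast$ every world lay on the chain, whereas here $W$ may carry worlds off the chain, and possibly above it, that are nonetheless $\leqslant$-accessible from its base. The point to verify is that copying each such world's valuation from the next chain element above---as in Lemma~\ref{lem:tiling-dense}---creates no spurious transitions for $\pDiamond$ or $\PDiamond$, so that these operators still simulate $\triangleleft$ and the stepping relation faithfully, and that any worlds lying above the whole chain can be assigned a separator valuation without affecting truth at the base. This is a routine but careful adaptation of the verifications already performed in Lemmas~\ref{lem:tiling-dense} and~\ref{lem:monadic}, and is where the substance of the proof resides.
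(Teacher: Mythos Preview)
Your proposal is correct and follows essentially the same route as the paper: apply to $B$ the transformations $\cdot'$, $\cdot^\ast$, and $\Box \mapsto \Box^+$ that carried Lemma~\ref{lem:tiling-reduction} to Theorem~\ref{thr:nat}, now using Lemma~\ref{lem:tiling-dense} as the base, and close with the constant-domain sandwich argument. The paper's own proof is a one-line pointer to exactly this plan; your write-up in fact spells out more of the details---including the treatment of off-chain and above-chain worlds---than the paper does.
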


\begin{corollary}
  \label{cor:q-and-r}
  Logics $\mathbf{L} (\mathds{Q}, \leqslant)$,
  ${\mathbf{L}_{c}} (\mathds{Q}, \leqslant)$,
  $\mathbf{L} (\mathds{Q}, <)$, ${\mathbf{L}_{c}} (\mathds{Q}, <)$,
  $\mathbf{L} (\mathds{R}, \leqslant)$,
  ${\mathbf{L}_{c}} (\mathds{R}, \leqslant)$,
  $\mathbf{L} (\mathds{R}, <)$ and ${\mathbf{L}_{c}} (\mathds{R}, <)$
  are $\Sigma^0_1$-hard in languages with two individual variables,
  one monadic predicate letter and one proposition letter.
\end{corollary}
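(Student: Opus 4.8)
The plan is to obtain Corollary~\ref{cor:q-and-r} as a direct instantiation of Theorem~\ref{thr:ascending-chain}, with no new construction required. Each of the eight logics listed has the shape $\mathbf{L}(W, R)$ or $\mathbf{L}_{c}(W, R)$ with $W \in \{\mathds{Q}, \mathds{R}\}$ and $R \in \{<, \leqslant\}$, and I would pair the four strict-order logics with the parameter choice $R = {<}$ and the four partial-order logics with $R = {\leqslant}$. So the whole argument reduces to checking that, for each pair $\langle W, R \rangle$ arising this way, the hypotheses of Theorem~\ref{thr:ascending-chain} are met and that the logic in question sits inside the sandwich $\mathbf{L}(W, R) \subseteq L \subseteq \mathbf{L}_{c}(W, R)$.

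For the hypotheses: both $\langle \mathds{Q}, < \rangle$ and $\langle \mathds{R}, < \rangle$ are strict linear orders, and each contains an infinite ascending chain of pairwise distinct elements---for instance the image of the naturals $0 < 1 < 2 < \cdots$ under the usual embedding of $\nat$ into $\mathds{Q}$ (respectively $\mathds{R}$). In each case $\leqslant$ is, as always for these orders, the reflexive closure of $<$, and the two choices $R = {<}$ and $R = {\leqslant}$ are exactly the extreme relations lying between $<$ and its reflexive closure, so both qualify as ``a binary relation between $<$ and $\leqslant$.'' Finally, for every $\langle W, R \rangle$ we have $\mathbf{L}(W, R) \subseteq \mathbf{L}_{c}(W, R)$: predicate frames with a constant domain form a subclass of predicate frames with expanding domains, so validity over the larger class is the stronger requirement and yields the smaller logic. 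Since each of the eight logics is literally one of the two endpoints $\mathbf{L}(W, R)$ or $\mathbf{L}_{c}(W, R)$ of this inclusion, the sandwich condition $\mathbf{L}(W, R) \subseteq L \subseteq \mathbf{L}_{c}(W, R)$ holds with $L$ equal to that endpoint. Theorem~\ref{thr:ascending-chain} then applies to each of the eight logics and delivers the claim.

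The only point requiring care---and the place where the conclusion must not be misread---is which half of Theorem~\ref{thr:ascending-chain}'s conclusion to invoke. The corollary asserts that the \emph{logics themselves}, i.e.\ the sets of validities, are $\Sigma^0_1$-hard, not merely that their satisfiability problems are $\Pi^0_1$-hard. These two notions are complementary under negation: a formula $\psi$ is satisfiable on the relevant class of frames if, and only if, $\neg \psi \notin L$, so the computable map $\psi \mapsto \neg \psi$ reduces the complement of the satisfiability set to $L$, and $\Pi^0_1$-hardness of satisfiability thereby passes to $\Sigma^0_1$-hardness of $L$. Theorem~\ref{thr:ascending-chain} already records precisely this last step in its statement (``hence, $L$ is $\Sigma^0_1$-hard''), so I would simply cite that half of its conclusion. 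There is no further obstacle: the corollary is an immediate specialisation once the parameters $W$ and $R$ are matched to the eight logics and the $\Sigma^0_1$-hardness half of the theorem is read off, with the language restriction to two individual variables, one monadic predicate letter and one proposition letter carried over verbatim.
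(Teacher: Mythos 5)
Your proposal is correct and takes essentially the same route as the paper: the corollary appears there without a separate proof, being exactly the immediate instantiation of Theorem~\ref{thr:ascending-chain} with $W \in \{\mathds{Q}, \mathds{R}\}$, $R \in \{<, \leqslant\}$, and $L$ taken to be either endpoint of the sandwich $\mathbf{L}(W,R) \subseteq L \subseteq \mathbf{L}_{c}(W,R)$. Your explicit check of the hypotheses (infinite ascending chain, $\leqslant$ as reflexive closure) and of the satisfiability--validity duality via $\psi \mapsto \neg\psi$ only spells out what the theorem's phrasing ``hence, $L$ is $\Sigma^0_1$-hard'' already encapsulates.
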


Well-known axiomatically defined predicate modal logics coincide with
some logics mentioned in Corollary~\ref{cor:q-and-r}.  Let
$\mathbf{K}$ be the minimal propositional normal modal logic and, for
a set of formulas $\Gamma$ and a formula $\vp$, let
$\Gamma\oplus\varphi$ be the closure of $\Gamma\cup\{\varphi\}$ under
modus ponens, necessitation and propositional substitution.  Recall
the following definitions of propositional modal logics:
$$
\begin{array}{lcl}
  \mathbf{S4.3}
  & =
  & \mathbf{K}
    \oplus \mathit{ref}
    \oplus \Box p \imp \Box \Box p
    \oplus \Box (\Box p \imp q) \dis \Box (\Box q \imp p);
    \smallskip\\
  \mathbf{K4.3.D.X}
  & =
  & \mathbf{K}
    \oplus \Box p \imp \Box\Box p
    \oplus \Box (\Box^+ p \imp q) \dis \Box (\Box^+ q \imp p)
    \oplus \Diamond \top
    \oplus \Box \Box p \imp \Box p.
\end{array}
$$
For a propositional modal logic $L$, denote by $\mathbf{Q}L$ the
minimal predicate modal logic containing $\mathbf{QCl} \cup L$.  It
follows from the definitions of $\mathbf{S4.3}$ and
$\mathbf{K4.3.D.X}$ given above that logics $\mathbf{QS4.3}$ and
$\mathbf{QK4.3.D.X}$ are finitely axiomatizable and, hence,
recursively enumerable, i.e., they are in~$\Sigma^0_1$.

It is well known~\cite{Corsi93} that
$\mathbf{QS4.3} = \mathbf{L} (\mathds{Q}, \leqslant)$ and
$\mathbf{QK4.3.D.X} = \mathbf{L} (\mathds{Q}, <)$.  We, therefore, obtain
the following result:

\begin{corollary}
  \label{cor:QS4.3}
  The logics $\mathbf{QS4.3}$ and $\mathbf{QK4.3.D.X}$ are
  $\Sigma^0_1$-complete in languages with two individual variables,
  one monadic predicate letter and one proposition letter.
\end{corollary}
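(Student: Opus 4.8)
The plan is to deduce $\Sigma^0_1$-completeness by combining the hardness supplied by Corollary~\ref{cor:q-and-r} with the recursive enumerability already noted above, bridged by the known coincidences of the axiomatically defined logics with the semantically defined ones. No new semantic construction is required: all the difficult work has been done in establishing Theorem~\ref{thr:ascending-chain} and Corollary~\ref{cor:q-and-r}.

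First, I would record membership in $\Sigma^0_1$. As observed just before the corollary, $\mathbf{QS4.3}$ and $\mathbf{QK4.3.D.X}$ are finitely axiomatizable and hence recursively enumerable, i.e.\ they lie in $\Sigma^0_1$. This membership descends to the fragments in the restricted language: the set of formulas using only two individual variables, one monadic predicate letter and one proposition letter is decidable, and the intersection of a recursively enumerable set with a decidable set is again recursively enumerable. Consequently, the relevant fragments of both logics are in $\Sigma^0_1$. Second, I would invoke the identifications $\mathbf{QS4.3} = \mathbf{L}(\mathds{Q}, \leqslant)$ and $\mathbf{QK4.3.D.X} = \mathbf{L}(\mathds{Q}, <)$. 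Since these are equalities of sets of formulas over the full language, they restrict to equalities of the corresponding fragments in the restricted language, so that hardness stated for one side transfers verbatim to the other.

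Third, I would apply Corollary~\ref{cor:q-and-r}, which asserts that $\mathbf{L}(\mathds{Q}, \leqslant)$ and $\mathbf{L}(\mathds{Q}, <)$ are $\Sigma^0_1$-hard in precisely this restricted language. Transporting the hardness through the identifications yields $\Sigma^0_1$-hardness of the corresponding fragments of $\mathbf{QS4.3}$ and $\mathbf{QK4.3.D.X}$, and combining this with the membership established above gives the claimed $\Sigma^0_1$-completeness. The proof has essentially no obstacle remaining; the only points requiring care are (i) verifying that $\Sigma^0_1$-membership genuinely descends to the restricted fragment, and (ii) confirming that the logic identifications, being equalities over the full language, carry the hardness result unchanged to the fragment.
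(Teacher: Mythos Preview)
Your proposal is correct and follows essentially the same approach as the paper: the paper derives the corollary directly from the finite axiomatizability of $\mathbf{QS4.3}$ and $\mathbf{QK4.3.D.X}$ (giving $\Sigma^0_1$-membership), the identifications $\mathbf{QS4.3} = \mathbf{L}(\mathds{Q},\leqslant)$ and $\mathbf{QK4.3.D.X} = \mathbf{L}(\mathds{Q},<)$ due to Corsi, and the $\Sigma^0_1$-hardness from Corollary~\ref{cor:q-and-r}. Your additional care in noting that $\Sigma^0_1$-membership descends to the restricted fragment is a point the paper leaves implicit.
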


Thus, $\mathbf{L} (\mathds{Q}, \leqslant)$ and
$\mathbf{L} (\mathds{Q}, <)$ are $\Sigma^0_1$-complete in languages
with two individual variables, one monadic predicate letter and one
proposition letter.

\section{Some other logics}
\label{sec:rest}

In this section, we note some corollaries of
Theorem~\ref{thr:ascending-chain} other than those mentioned in
Corollary~\ref{cor:q-and-r}.  We also note that a straightforward
modification of the proof of Theorem~\ref{thr:ascending-chain}
establishes $\Sigma^0_1$-completeness of the logic $\mathbf{QK4.3}$ in
the languages we consider.

The first corollary concerns logics of infinite ordinals: as before,
for ordinals, by $<$ and $\leqslant$ we mean, respectively, the
relation $\in$ and its reflexive closure.

\begin{corollary}
  \label{cor:inf-ordinals}
  Let $\alpha$ be an infinite ordinal.  Then,
  $\mathbf{L} (\alpha, <)$, ${\mathbf{L}_{c}} (\alpha, <)$,
  $\mathbf{L} (\alpha, \leqslant)$ and
  ${\mathbf{L}_{c}} (\alpha, \leqslant)$ are $\Sigma^0_1$-hard in
  languages with two individual variables, one monadic predicate
  letter and one proposition letter.
\end{corollary}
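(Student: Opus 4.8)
The plan is to derive this corollary directly from Theorem~\ref{thr:ascending-chain}, since every infinite ordinal, viewed as a strict linear order under the membership relation, is an instance of the kind of structure to which that theorem applies. The only work is to check the hypotheses of Theorem~\ref{thr:ascending-chain} and then read off the four claimed logics by making two choices of the relation $R$.

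First I would observe that $\langle \alpha, < \rangle$, where $<$ is the membership relation $\in$ on $\alpha$, is a strict linear order; indeed, it is even a well-order. Next I would verify the ascending-chain hypothesis: since $\alpha$ is infinite, we have $\omega \leqslant \alpha$, so every finite ordinal belongs to $\alpha$, and the finite ordinals give an infinite ascending chain $0 < 1 < 2 < \ldots$ of pairwise distinct elements of $\alpha$. Finally, letting $\leqslant$ denote the reflexive closure of $<$ as in the statement, I note that both $<$ and $\leqslant$ are relations $R$ with $\,{<} \subseteq R \subseteq {\leqslant}\,$; that is, each of them is a relation between $<$ and $\leqslant$ in the sense required by Theorem~\ref{thr:ascending-chain}.

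With these hypotheses in place, I would apply Theorem~\ref{thr:ascending-chain} twice. Taking $R = {<}$, I note that $\mathbf{L}(\alpha, <) \subseteq L \subseteq \mathbf{L}_{c}(\alpha, <)$ holds both for $L = \mathbf{L}(\alpha, <)$ (the left inclusion is an equality, the right one holds because constant-domain frames form a subclass of all predicate frames, so validity on the larger class entails validity on the smaller) and for $L = \mathbf{L}_{c}(\alpha, <)$; hence the theorem yields $\Pi^0_1$-hardness of satisfiability, and so $\Sigma^0_1$-hardness, for both $\mathbf{L}(\alpha, <)$ and $\mathbf{L}_{c}(\alpha, <)$ in the restricted languages. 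Taking $R = {\leqslant}$ and arguing in exactly the same way with $L = \mathbf{L}(\alpha, \leqslant)$ and $L = \mathbf{L}_{c}(\alpha, \leqslant)$ gives the corresponding conclusion for the two remaining logics.

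I do not expect any genuine obstacle here: the content is entirely carried by Theorem~\ref{thr:ascending-chain}, and the proof reduces to confirming that an arbitrary infinite ordinal supplies a strict linear order with an infinite ascending chain. The only point that warrants explicit mention is the double use of the inclusion $\mathbf{L}(\alpha, R) \subseteq L \subseteq \mathbf{L}_{c}(\alpha, R)$, which lets a single application of the theorem cover both the expanding-domain and the constant-domain logic for each of the two relations $<$ and $\leqslant$.
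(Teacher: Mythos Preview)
Your proposal is correct and follows exactly the approach the paper intends: Corollary~\ref{cor:inf-ordinals} is presented in the paper simply as a corollary of Theorem~\ref{thr:ascending-chain}, with no separate proof given, and you have supplied precisely the verification that an infinite ordinal under $\in$ is a strict linear order with the required infinite ascending chain, together with the two choices $R = {<}$ and $R = {\leqslant}$.
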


The second concerns logics of non-standard models of the elementary
theories of some of the structures considered thus far (the elementary
theory of the structure $\frak{A}$ is denoted by
${\sf Th}(\frak{A})$):

\begin{corollary}
  \label{cor:q-r-n-ast}
  Let $\frak{A}$ be one of the structures
  $\langle \mathds{N}, {\leqslant} \rangle$,
  $\langle \mathds{N}, {<} \rangle$,
  $\langle \mathds{Q}, {\leqslant} \rangle$,
  $\langle \mathds{Q}, {<} \rangle$,
  $\langle \mathds{R}, {\leqslant} \rangle$ and
  $\langle \mathds{R}, {<} \rangle$, and let $\frak{F}$ be a
  non-standard classical first-order model of ${\sf Th}(\frak{A})$.
  Then, $\mathbf{L}(\frak{F})$ and ${\mathbf{L}_{c}}(\frak{F})$ are
  $\Sigma^0_1$-hard in languages with two individual variables, one
  monadic predicate letter and one proposition letter.
\end{corollary}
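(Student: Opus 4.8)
The plan is to reduce the statement to Theorem~\ref{thr:ascending-chain}. Observe first that this theorem applies to \emph{any} linear order---whether presented with its strict relation or with its reflexive closure---as soon as that order contains an infinite ascending chain of pairwise distinct elements. It therefore suffices to verify that every classical first-order model $\frak{F}$ of ${\sf Th}(\frak{A})$ is a linear order of this kind. Note that non-standardness is never actually used in the argument: it works verbatim for every model of ${\sf Th}(\frak{A})$, the restriction to non-standard $\frak{F}$ serving only to guarantee that the conclusion is not already subsumed by the results proved for the standard structures themselves.

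First I would record that, for each of the six structures $\frak{A}$, the axioms of linear order---irreflexivity (respectively reflexivity and antisymmetry), transitivity and totality---are first-order and hence belong to ${\sf Th}(\frak{A})$; consequently $\frak{F}$ is a linear order, its relation $R$ being the interpretation in $\frak{F}$ of the order symbol ($<$ or $\leqslant$) of $\frak{A}$. Next, each of the six structures satisfies the sentence $\forall x\, \exists y\, (x < y)$ expressing the absence of a greatest element; this sentence therefore lies in ${\sf Th}(\frak{A})$ and so holds in $\frak{F}$. Using it I would build greedily a sequence $a_0, a_1, a_2, \ldots$ with $a_0 < a_1 < a_2 < \cdots$: choose $a_0$ arbitrarily and, having chosen $a_n$, let $a_{n+1}$ be any element with $a_n < a_{n+1}$, which exists by the displayed sentence. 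Since the order is linear, a strictly ascending sequence consists of pairwise distinct elements, so $\frak{F}$ contains an infinite ascending chain of pairwise distinct elements.

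Finally I would invoke Theorem~\ref{thr:ascending-chain} with $\langle W, < \rangle$, where $W$ is the domain of $\frak{F}$ and $<$ is the strict part of the order of $\frak{F}$, taking the relation $R$ to be that of $\frak{F}$ itself---which is $<$ when $\frak{A}$ carries the strict order and its reflexive closure $\leqslant$ otherwise, and in either case lies between $<$ and $\leqslant$. Because constant-domain predicate frames form a subclass of the expanding-domain ones, we have $\mathbf{L}(W, R) \subseteq {\mathbf{L}_{c}}(W, R)$, so both $\mathbf{L}(\frak{F}) = \mathbf{L}(W, R)$ and ${\mathbf{L}_{c}}(\frak{F}) = {\mathbf{L}_{c}}(W, R)$ satisfy the hypothesis $\mathbf{L}(W, R) \subseteq L \subseteq {\mathbf{L}_{c}}(W, R)$ of the theorem. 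Theorem~\ref{thr:ascending-chain} then gives that satisfiability for each is $\Pi^0_1$-hard, whence each logic is $\Sigma^0_1$-hard, as required. There is no genuine obstacle here; the only point demanding care is checking that an infinite ascending chain survives in an arbitrary model of the elementary theory, and this is immediate since the single first-order sentence asserting the absence of a greatest element already forces one.
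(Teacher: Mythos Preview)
Your proposal is correct and follows exactly the route the paper intends: the corollary is stated without proof, as an immediate consequence of Theorem~\ref{thr:ascending-chain}, and you have simply spelled out the straightforward verification that any model of ${\sf Th}(\frak{A})$ is a linear order with no greatest element and hence contains an infinite ascending chain. The only cosmetic point is that for the structures in the signature $\{\leqslant\}$ your sentence $\forall x\,\exists y\,(x < y)$ should be written with $<$ abbreviating $x \leqslant y \wedge x \ne y$, but this is a notational triviality, not a gap.
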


Lastly, a slight modification of the argument of
Section~\ref{sec:dense} gives us the following result on
$\mathbf{QK4.3}$, the logic of strict linear orders~\cite{Corsi93}.
We recall that
$$
\begin{array}{lcl}
\mathbf{K4.3}
  & =
  & \mathbf{K}
    \oplus \Box p \imp \Box\Box p
    \oplus \Box (\Box^+ p \imp q) \dis \Box (\Box^+ q \imp p).
\end{array}
$$
Thus, $\mathbf{QK4.3}$ is finitely axiomatizable and, hence,
recursively enumerable.

\begin{corollary}
  \label{cor:QK4.3}
  The logic $\mathbf{QK4.3}$ is $\Sigma^0_1$-complete in languages with
  two individual variables, one monadic predicate letter and one
  proposition letter.
\end{corollary}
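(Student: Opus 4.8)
The plan is to prove the two directions of $\Sigma^0_1$-completeness separately. Membership in $\Sigma^0_1$ is immediate: as already noted, $\mathbf{QK4.3}$ is finitely axiomatizable over $\mathbf{QCl}$, hence recursively enumerable, and intersecting this $\Sigma^0_1$ set with the decidable set of formulas of the restricted language keeps it in $\Sigma^0_1$. All the work goes into $\Sigma^0_1$-hardness, which I would obtain by reducing the $\Pi^0_1$-complete $\nat \times \nat$ tiling problem requiring only $(T_1)$ and $(T_2)$---the very problem treated by Lemma~\ref{lem:tiling-dense}---to the complement of $\mathbf{QK4.3}$.

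Let $B^+$ be the formula in the restricted language obtained by applying the monadic elimination of Section~\ref{sec:elim-monadic} to $B$ and then replacing every $\Box$ by $\Box^+$, exactly as in the proof of Theorem~\ref{thr:ascending-chain}; it has two variables, one monadic and one proposition letter. Since $\mathbf{QK4.3}$ is the logic of the class of \emph{all} strict linear orders~\cite{Corsi93}, the claim I would prove is that a tiling satisfying $(T_1)$ and $(T_2)$ exists if, and only if, $\neg B^+ \notin \mathbf{QK4.3}$ (the map sending a tiling instance to $B^+$ being plainly computable). As the tiling problem is $\Pi^0_1$-hard, this makes the complement of $\mathbf{QK4.3}$ be $\Pi^0_1$-hard and hence $\mathbf{QK4.3}$ itself $\Sigma^0_1$-hard.

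For the forward direction I would note that a tiling yields, via the ``only if'' construction underlying Lemma~\ref{lem:tiling-dense} composed with the eliminations of Sections~\ref{sec:elim-binary} and~\ref{sec:elim-monadic} (and reading $\Box^+$ on $\langle \nat, < \rangle$ as $\leqslant$), a model of $B^+$ based on $\langle \nat, < \rangle$; as $\langle \nat, < \rangle$ is a strict linear order, soundness gives $\neg B^+ \notin \mathbf{QK4.3}$. The reverse direction is the crux. Here $\neg B^+ \notin \mathbf{QK4.3}$ means, by completeness~\cite{Corsi93}, that $B^+$ is satisfied at some world of some model based on \emph{some} strict linear order $\langle W, < \rangle$, from which I would extract a tiling by running the ``if'' argument of Lemma~\ref{lem:tiling-dense}.

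The hard part will be this extraction, because---unlike in Theorem~\ref{thr:ascending-chain}, where $W$ is fixed in advance and assumed to contain an infinite ascending chain---here $W$ is an arbitrary order handed to us by completeness, so the chain must be produced from the formula itself. The key point is that $B^+$ forces its own ascending chain: $A_1$ and $A_2$ yield an infinite $\triangleleft$-sequence $a_0 \triangleleft a_1 \triangleleft \ldots$ in the domain of the root, and $A_4$ yields worlds $w_0, w_1, w_2, \ldots$ with $w_n \models M(a_n)$ and $w_n\, R_{\scriptsize\pDiamond}\, w_{n+1}$; since $R_{\scriptsize\pDiamond}$ is irreflexive---so $w\, R_{\scriptsize\pDiamond}\, v$ implies $w < v$---these worlds form a strictly ascending chain whether $W$ is discrete or dense. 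With the chain in hand, $A_5$ through $A_8$ supply, just as in the ``if'' part of Lemma~\ref{lem:tiling-reduction}, a tiling function satisfying $(T_1)$ and $(T_2)$; note that $A_9$, the sole clause dropped in passing from $A$ to $B$, served only to secure $(T_3)$, which is not needed here. Together with the $\Sigma^0_1$ upper bound, this establishes $\Sigma^0_1$-completeness.
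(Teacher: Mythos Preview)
Your proposal is correct and follows essentially the same route as the paper: reduce the $\Pi^0_1$ tiling problem (conditions $(T_1)$ and $(T_2)$ only) to satisfiability over $\mathbf{QK4.3}$-frames, the crux being that the conjuncts $A_1$, $A_2$ and $A_4$ themselves force an infinite strictly ascending chain of worlds, so the extraction argument of Lemma~\ref{lem:tiling-reduction} goes through even though the underlying strict linear order is not fixed in advance. The paper phrases this as showing ``the model satisfying $A^+$ is infinite'' by $A_1$, $A_2$ and $A^+_4$, which is precisely your observation that $R_{\scriptsize\pDiamond}$ is irreflexive and the marked worlds $w_0, w_1, \ldots$ are pairwise distinct; your choice of the $B$-variant (dropping $A_9$) is the cleaner one for the non-recurrent tiling problem, but the approach is the same.
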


\begin{proof}
  One can show that the formula $A^+$ defined in
  Section~\ref{sec:discrete} is satisfiable in a model based on a
  $\mathbf{QK4.3}$-frame if, and only if, there exists a tiling of\/
  $\nat \times \nat$ satisfying \textup{$(T_1)$} and \textup{$(T_2)$}.
  We only notice that, in the proof of the ``only if'' part, we need
  to show that the model satisfying $A^+$ is infinite---this readily
  follows by $A_1$, $A_2$ and $A^+_4$.
\end{proof}

\section{Discussion}
\label{sec:discussion}

\renewcommand{\mynext}{\ocircle}

We now discuss some questions arising out of the present work.

The first question is whether our main result, Theorem~\ref{thr:nat},
can be strengthened to languages with two variables and a single
monadic predicate letter: is the ``extra'' proposition letter
necessary?

For the majority of natural predicate modal---and closely related
superintuitionistic---logics similar results have been
obtained~\cite{RSh19SL,RShJLC20a,RShJLC21a} for languages with a
single monadic predicate letter (the number of
variables---two~\cite{RSh19SL} or
three~\cite{RShJLC20a,RShJLC21a}---depends on the logic).  Those
results do not, however, cover some notable logics---the known results
for the predicate counterparts of propositional modal logics
$\mathbf{GL.3}$, $\mathbf{Grz.3}$ and $\mathbf{S5}$ involve ``extra''
proposition letters~\cite[Discussion]{RSh19SL}.  The propositional
modal logics of frames $\langle \nat, \leqslant \rangle$ and
$\langle \nat, < \rangle$---in common with $\mathbf{GL.3}$,
$\mathbf{Grz.3}$ and $\mathbf{S5}$---are NP-complete, i.e., not as
computationally hard (provided PSPSACE $\ne$ NP) as PSPACE-hard
propositional logics whose first-order counterparts are known to be
undecidable in languages with a few variables and a single monadic
predicate letter.  Whether this observation points to a genuine
connection is unclear; the hypothesis, however, seems to be worth
investigating.  It seems at least plausible that predicate logics of
$\langle \nat, \leqslant \rangle$ and $\langle \nat, < \rangle$ are
decidable in languages with two variables and a single monadic letter.

We note that a stronger result, $\Sigma^1_1$-hardness of
satisfiability for languages with two variables and a single monadic
letter, is relatively easily obtainable for the logics of the naturals
in the more expressive language containing, alongside $\Box$, the
unary operator $\mynext$ (``next'') with the truth condition
$ \frak{M}, n \models^g \mynext \vp \mbox{ if } \frak{M}, n + 1
\models^g\varphi$.  The resultant logic is a notational variant of the
first-order quantified linear time temporal logic $\mathbf{QLTL}$ with
temporal operators $\Box$ (interpreted as ``always in the future'')
and $\mynext$---even without the more expressive binary operator
``until.''  It is well known~\cite[Theorem 2]{HWZ00} that
satisfiability for $\mathbf{QLTL}(\Box, \mynext)$ is $\Sigma^1_1$-hard
in languages with two variables and only monadic predicate letters:
the proof, which inspired the proofs presented above, is a reduction
of the recurrent $\nat \times \nat$ tiling problem described in
Section~\ref{sec:ref}.  Since the number of tile types in the
recurrent tiling problem is unbounded, \cite[Theorem 2]{HWZ00}
establishes $\Sigma^1_1$-hardness for languages with an unlimited
supply of monadic predicate letters.  The proof of~\cite[Theorem
2]{HWZ00} can, however, be modified to establish $\Sigma^1_1$-hardness
for languages with a single monadic predicate letter.  Perhasps the
easiest way is to modify the formulas used in the original
proof~\cite{HWZ00} so that the satisfying model has equal-seized gaps
(non-empty sequences of worlds) between the worlds corresponding to
the tiling; the size of the gaps is proportional to the number of tile
types.  The binary letter can then be modelled as in the proof of
Lemma~\ref{lem:Kripke} above.  To model all the monadic letters with a
single one, we use the following observation.  If $f$ is a tiling
function satisfying ($T_1$) through ($T_3$), then for every
$m \in \nat$, there do not exist $t, t' \in T$ such that $t \ne t'$
and, for every $n \in \nat$, both $f(n, m) = t$ and $f(n, m) = t'$: an
entire row cannot be tiled simultaneously with tiles of two distinct
types.  Therefore, a construction similar to the one used in the proof
of Lemma~\ref{lem:monadic} above would not produce a model where two
successive worlds satisfy $\forall x\, P(x)$.  Hence, we can use
$\mynext \forall x\, P(x) \con {\mynext}{\mynext} \forall x\, P(x)$ in
place of the proposition letter $q$ in the final reduction.  This
gives us the following result:

\begin{theorem}
  \label{thr:qltl}
  Satisfiability for $\mathbf{QLTL}(\Box, \mynext)$ is
  $\Sigma^1_1$-hard in languages with two individual variables and a
  single monadic predicate letter.
\end{theorem}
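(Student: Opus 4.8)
The plan is to start from the reduction of \cite[Theorem 2]{HWZ00}, which already yields $\Sigma^1_1$-hardness of satisfiability for $\mathbf{QLTL}(\Box, \mynext)$ in languages with two individual variables and an unbounded supply of monadic predicate letters, and to compress its signature down to a single monadic letter by reusing the signature-reduction machinery of Sections~\ref{sec:elim-binary} and~\ref{sec:elim-monadic}. The recurrent $\nat \times \nat$ tiling problem of Section~\ref{sec:reduction} is encoded there with a binary letter $\triangleleft$, monadic letters marking rows and the tile types $t_0, \ldots, t_s$, and a proposition letter, the operator $\mynext$ supplying the exact immediate-successor structure that our monomodal encoding had to simulate with $\pDiamond$.

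The first step is to rebuild the encoding so that its satisfying models carry equal-sized gaps: between consecutive worlds encoding rows of the grid I would insert a fixed number of auxiliary worlds, the number being proportional to the number $s+1$ of tile types. Since $\mynext$ counts transitions exactly, these gaps can be pinned down by iterated applications of $\mynext$, and this is the feature that subsequently makes room to record several monadic letters inside one. Next, I would eliminate the binary letter $\triangleleft$ exactly as in Lemma~\ref{lem:Kripke}, substituting for $x \triangleleft y$ a modal formula built from two fresh monadic letters carried on the gap worlds.

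The third step collapses all the tile-type and marking letters, together with the two fresh letters introduced for $\triangleleft$, into a single monadic letter $P$, following the construction of Lemma~\ref{lem:monadic}: the value of a letter $P_n$ at a row-world is recorded by the value of $P$ at the auxiliary world sitting $n$ steps into the gap, and is read back off by counting the requisite number of $\mynext$-steps. The essential new ingredient is the elimination of the proposition letter. In Lemma~\ref{lem:monadic} the letter $q$ serves only to mark the ``old'' row-worlds off from the inserted ones; here I would instead force $\forall x\, P(x)$ to hold at the separating worlds and characterise a row-world by $\mynext \forall x\, P(x) \con {\mynext}{\mynext}\forall x\, P(x)$, replacing every occurrence of $q$ in the final reduction by this formula. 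The resulting formula has $P$ as its only predicate letter.

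The main obstacle is the correctness of this last substitution, and in particular the ``if'' direction, where a tiling must be recovered from an arbitrary satisfying model. Soundness is routine: one builds the intended gapped model from a tiling function $f$ and checks, using the observation that no row is tiled with two distinct types, that $\forall x\, P(x)$ never holds at two successive worlds, whence $\mynext \forall x\, P(x) \con {\mynext}{\mynext}\forall x\, P(x)$ is true exactly at the intended row-worlds. For completeness I would have to establish the same invariant in every model of the compressed formula---that the designated row-worlds are separated by gaps of the prescribed length and that $\forall x\, P(x)$ cannot occur twice in immediate succession---so that the extraction of $f$ proceeds as in Lemma~\ref{lem:monadic}. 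Deriving this invariant from the formula alone, rather than reading it off the intended model, is where the real work lies.
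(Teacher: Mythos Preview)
Your proposal is correct and takes essentially the same approach as the paper's sketch: equal-sized gaps proportional to the number of tile types, binary-letter elimination as in Lemma~\ref{lem:Kripke}, monadic compression as in Lemma~\ref{lem:monadic}, and the replacement of $q$ by $\mynext \forall x\, P(x) \con {\mynext}{\mynext}\forall x\, P(x)$ justified by exactly the row-uniqueness observation you state. The ``if'' direction you flag as the main obstacle is, just as in Lemmas~\ref{lem:Kripke} and~\ref{lem:monadic}, handled by treating the substituted formula as a black-box definition and rerunning the earlier extraction argument verbatim, so the paper does not regard it as requiring additional work.
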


The second question arising out of the present work is whether
stronger lower bounds are obtainable for the logics of the
reals---provided they are distinct from the logics of the rationals.
One approach would be to attempt to adapt techniques developed by
Reynolds and Zakharyaschev~\cite{RZ01} for proving
$\Sigma^1_1$-hardness of products of two propositional modal logics of
linearly ordered frames.  In particular, Reynolds and Zakharyaschev
establish~\cite[Theorem 6.1]{RZ01} $\Sigma^1_1$-hardness of product
logics satisfying two conditions: first, the product logic admits a
frame with infinite ascending chains along both accessibility
relations; second, the order relation associated with one of the
factor logics is Dedekind complete.  Even though this setup appears
similar to predicate modal logics of the reals, it is not immediately
clear how to apply the techniques of Reynolds and
Zakharyaschev~\cite{RZ01} in our circumstances since it is not obvious
how an infinite linear partial or strict order, which has to be
transitive, can be defined on domains of a predicate Kripke model
using formulas with only two variables.

The third question is whether our results are tight.  We are not aware
of upper-bound results for logics considered here, the exception being
the logics of the rationals, which are, as mentioned in
Section~\ref{sec:dense}, $\Sigma^0_1$-complete.  Thus, a search for
upper bounds appears to be an interesting topic of future study.

The final question we mention is whether analogous results can be
obtained for superintuitionistic logic of
$\langle \nat, \leqslant \rangle$.  Whether this can be done is
unclear to us: the techniques used here appear unsuitable for
superintuitionistic logics given the difficulty of modelling the
changing values of tile types on a linear frame with a hereditary
valuation.

\section*{Acknowledgements}

We are grateful to Valentin Shehtman for discussions of an earlier
version of the paper.  We are indebted to the anonymous reviewers for
helping to improve the paper.


\end{document}